\newcommand{\icalpremove}[1]{#1}
\newcommand{\icalpversion}[1]{}
\definecolor{Darkblue}{rgb}{0,0,0.4}
\definecolor{Brown}{cmyk}{0,0.81,1.,0.60}
\definecolor{Purple}{cmyk}{0.45,0.86,0,0}
\newenvironment{subproof}[1][\proofname]{%
  \begin{proof}[#1]%
}{%
  \end{proof}%
}
\newcommand{\f}{\frac}
\newcommand{\nf}{\nicefrac}
\newcommand{\A}{{\mathcal{A}}}
\newcommand{\cD}{{\mathcal D}}
\newcommand{\BE}{\begin{enumerate}}
\newcommand{\EE}{\end{enumerate}}
\newcommand{\BI}{\begin{itemize}}
\newcommand{\EI}{\end{itemize}}
\newcommand{\Sum}{\displaystyle\sum\limits}
\newcommand{\R}{\mathbb R}
\newcommand{\Z}{\mathbb Z}
\newcommand{\I}{{\mathbb I}}
\newcommand{\calS}{{\mathcal S}}
\newcommand{\cO}{{\mathcal O}}
\newcommand{\B}{{\mathcal B}}
\newcommand{\eps}{\varepsilon}
\newcommand{\e}{\varepsilon}
\newcommand{\al}{\alpha}
\newcommand{\m}{\mathcal}
\newcommand{\lf}{\lfloor\!\!\lfloor}
\newcommand{\rf}{\rfloor\!\!\rfloor}
\newcommand{\poly}{\text{poly}}
\newcommand{\lmt}{\left[\begin{matrix}}
\newcommand{\rmt}{\end{matrix}\right]}
\newcommand{\imm}{{\tt immed}}
\newcommand{\jif}{\ensuremath{J^-_f}\xspace}
\newcommand{\jof}{\ensuremath{J^+_f}\xspace}
\renewcommand{\emptyset}{\varnothing}
\newcommand{\ts}{\textstyle}
\newtheorem{theorem}{Theorem}
\numberwithin{theorem}{section}
\newtheorem{lemma}[theorem]{Lemma}
\newtheorem{claim}[theorem]{Claim}
\newtheorem{corollary}[theorem]{Claim}
\author[1]{\large Anupam Gupta}
 \author[2]{\large Amit Kumar}
 \author[1]{\large Jason Li}
 \affil[1]{Carnegie Mellon University} %\\       \{anupamg,ggurugan,dwajc\}@cs.cmu.edu}
 \affil[2]{IIT Delhi} %\\amitk@cse.iitd.ac.in}
\title{Non-Preemptive Flow-Time Minimization via Rejections}
\begin{document}

\maketitle
\begin{abstract}
We consider the online problem of minimizing weighted flow-time on unrelated machines.  Although much is known about this problem in the resource-augmentation setting, these results assume that jobs can be preempted. We give the first constant-competitive algorithm for the non-preemptive setting in the rejection model. In this rejection model, we are allowed to reject an $\e$-fraction of the total weight of jobs, and compare the resulting flow-time to that of the offline optimum which is required to schedule all jobs. This is arguably the weakest assumption in which such a result is known for weighted flow-time on unrelated machines. While our algorithms are simple, we need a delicate dual-fitting argument to bound the flow-time. %Indeed, we use the dual-fitting framework, with considerable more machinery to certify that the cost of our algorithm is within a constant of the optimum while only a small fraction of the jobs are rejected.
  \end{abstract}

\section{Introduction}
\label{sec:introduction}

Consider the problem of scheduling jobs  for weighted flow-time
minimization. Given a set of $m$ unrelated machines, jobs arrive online
and have to be processed on one of these machines. Each job $j$ is
released at some time $r_j$, has a potentially different processing
requirement (size) $p_{ij}$ on each machine $i$, and a weight $w_j$
which is a measure of its importance.  The objective function is the
\emph{weighted flow time} (or \emph{response time}): if the job $j$
completes its processing at time $C_j$, the flow/response time is
$(C_j - r_j)$, i.e., the time the job spends in the system. The goal is now to minimize the weighted sum
$\sum_j w_j(C_j - r_j)$. 

The problem of flow-time minimization has been extensively studied
both from theoretical and practical perspectives. 
The
theoretical analyses have to assume that the jobs can be pre-empted in
order to prove any meaningful competitive ratio, and it is easy to see
why. If we schedule a long low-weight job and a large number of short
high-weight items arrive meanwhile, we cannot afford to delay the latter (else we
suffer large flow-time), so the only solution would be to preempt the
former (See~\cite{KTW-journal} for strong lower bounds.) And even with
pre-emption, the problem turns out to be difficult for multiple
machines: e.g.,~\cite{GargK07-focs} show no bounded competitive ratio is
possible for the case of unrelated machines. Hence, it is natural to
consider models with ``resource augmentation'' where the algorithm has slightly
more resources than the adversary. E.g., in the speed-augmentation
setting, where  the algorithm uses machines of speed $(1+\e)$-times those 
of the adversary, Chadha et al.~\cite{CGKM09} showed how to get a
preemptive schedule with weighted flow time at most $\poly(1/\e)$ times
the optimal flow time.

A different model of resource augmentation was proposed by Choudhury et
al.~\cite{CDGK-jour} in the context of load balancing and maximum
weighted flow-time, where we are allowed to reject at most $\e$-fraction of
the total weight of the incoming jobs, but we compare with the optimum
off-line algorithm which is required to process all the jobs. The
motivation was two-fold: (a) the model is arguably more natural, since
it does not involve comparing to an imaginary optimal schedule running on
a slower machine, and (b) even with speed-augmentation, there are
problems, e.g. on-line load balancing, where even a constant factor
speed-up does not suffice to give meaningful results. Indeed, getting a
non-preemptive schedule for weighted flow-time is one of these problems.
Consider for example the following input: a job of unit size and unit
weight at time 0 arrives. As soon as the algorithm schedules it, the
adversary releases $L$ jobs of size $\e \ll 1/L^2.$ The optimal off-line
flow-time is $O(1)$, but the algorithm will incur total flow-time of
$\Omega(L)$.  The model of job rejection is intuitively more powerful
than speed-augmentation (although no such formal connection is known):
loosely, the speed-augmentation model only allows us to uniformly reject
an $\e$-fraction of each job, whereas the rejection model allows us to
``non-uniformly'' reject an arbitrary subset of jobs, as long as they
contribute only an $\e$-fraction of the total weight.

\subsection{Our Results}
We consider the problem of non-preemptive scheduling on unrelated machines
where the objective is to minimize total weighted flow-time of jobs. 
Our main result is the following:

\begin{theorem}[Main Theorem]
  \label{thm:main1}
  For the problem of online weighted flow-time minimization on unrelated
  machines, there is a deterministic algorithm that rejects at most an
  $\e$-fraction of the total weight of incoming jobs, and ensures that
  the total weighted flow time for the remaining jobs is at most an
  $O(1/\e^3)$ factor times the optimal weighted flow time without
  rejections.
\end{theorem}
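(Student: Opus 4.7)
The plan is to analyze an online algorithm via dual fitting against the natural time-indexed LP relaxation of weighted flow-time on unrelated machines. In the primal, variables $x_{ijt}$ represent the fraction of job $j$ processed on machine $i$ during unit time slot $t$, with completion constraints $\sum_{i,t} x_{ijt}/p_{ij} \ge 1$ (dual variables $\al_j$) and machine capacity constraints $\sum_j x_{ijt} \le 1$ (dual variables $\be_{it}$). The objective contributes $w_j(t - r_j + p_{ij})/p_{ij}$ per unit of $x_{ijt}$, yielding the dual constraint $\al_j \le \be_{it}\, p_{ij} + w_j(t - r_j + p_{ij})$ for every triple $(i,j,t)$ with $t \ge r_j$, and dual objective $\sum_j \al_j - \sum_{i,t}\be_{it}$. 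My strategy has three parts: (i) design a simple non-preemptive immediate-dispatch algorithm with a rejection rule; (ii) show the total rejected weight is at most $\e W$; and (iii) exhibit feasible duals whose objective is $\Om(\e^3)$ times the algorithm's flow-time on the non-rejected jobs, so that weak duality yields the theorem.

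For the algorithm, when job $j$ arrives at $r_j$ we compute for each machine $i$ an \emph{assignment cost} capturing both (a) the weighted flow-time $j$ would itself incur on $i$ (the time until all currently higher-priority queued/running work finishes, plus $p_{ij}$, weighted by $w_j$) and (b) the extra weighted delay $j$ would impose on all currently queued lower-priority jobs on $i$. Within each machine, we process jobs non-preemptively in decreasing order of density $w_j/p_{ij}$ (Smith's rule). We dispatch $j$ to the machine $i^\star$ minimizing its assignment cost; the rejection rule activates when even this minimum exceeds $1/\e$ times a ``reasonable'' per-unit-weight cost derived from the queued mass $j$ sees on every machine, so that refusing $j$ is cheaper than any non-preemptive placement could be.

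The first technical step, bounding rejected weight, should follow from a charging argument: each rejection of $j$ deposits $\e w_j$ tokens spread across the queued weight that triggered the rule, and since each unit of accepted weight is queued on only one machine at a time and is eventually discharged, a direct double counting bounds the total rejected weight by $\e W$. I expect this part to be routine given the correct calibration of the rejection threshold.

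The heart of the proof is the dual-fitting step. I would set $\al_j$ equal to roughly $\e^2$ times the algorithm's weighted flow-time for each non-rejected $j$, and $\be_{it}$ equal to roughly $\e$ times the weighted queued mass on machine $i$ at time $t$. Summing $\al_j$ recovers the algorithm's flow-time up to the $\e^2$ scaling, and a standard accounting shows $\sum_{i,t}\be_{it}$ is at most a constant fraction of $\sum_j \al_j$, so the dual objective is $\Om(\e^2)\cd \mathrm{ALG}$; the additional $1/\e$ factor is the slack built into the $\be_{it} p_{ij}$ term that we need for non-preemption. The main obstacle, and where most of the work lies, is verifying the dual constraint for triples $(i,j,t)$ with $i \ne i^\star(j)$ (where the greedy assignment rule must certify that dispatching $j$ to $i$ at time $t$ would already cost at least $\al_j$) and for epochs in which a single long non-preemptable job is blocking $i^\star$ at time $t$: here the rejection rule must guarantee that no such blocker was ever accepted unless its cost can be amortized inside the $\e^3$ budget. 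This forces a tight coupling between the priority order, the assignment cost, and the rejection threshold, and is what drives the final factor in Theorem~\ref{thm:main1}.
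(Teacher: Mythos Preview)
Your proposal has a fundamental gap: you only reject jobs \emph{at arrival}, based on their assignment cost being too high. But the core difficulty of non-preemptive flow-time is the reverse situation: a long low-density job $j'$ has already started running, and then many short high-density jobs arrive. You cannot reject the arrivals (that would reject almost all the weight), you cannot wait for $j'$ to finish (that incurs unbounded flow-time), and you cannot preempt. The only option is to reject the \emph{running} job $j'$. The paper handles this explicitly via the set $L(t)$ and ``delayed rejections'': a running job is marked preemptible (and ultimately rejected in $\B$) once the weight arriving during its execution exceeds $w_{j'}/\e$, so its rejection charges to those arrivals. Your algorithm has no such mechanism, and your dual-fitting sketch does not address what happens to $\be_{it}$ and the dual constraint during such a blocking interval; the ``rejection rule must guarantee that no such blocker was ever accepted'' clause cannot hold, since at the time $j'$ was accepted there was no indication it would become a blocker.

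Separately, the paper's route to Theorem~\ref{thm:main1} is modular and quite different from yours: it first solves the single-machine case with an $O(1/\e^2)$ bound that holds even against an offline optimum with $(1+\e)$ extra speed (Theorem~\ref{thm:main2}, via the delicate bucketing of $\al_j^+$ and $\al_j^-$ in tables $T^+$, $T^-$ and the charging in \S\ref{sec:lemma-alphaminus}), and then applies the immediate-dispatch result of~\cite{CGKM09,AGK12} as a black box to reduce unrelated machines to independent single-machine instances at the cost of one more $1/\e$ factor. Your direct multi-machine dual fitting would have to certify feasibility for all $(i,j,t)$ with $i \ne i^\star(j)$, which the paper entirely sidesteps; and your proposed setting $\al_j \approx \e^2 \cdot F_j^{\A}$, $\be_{it} \approx \e \cdot (\text{queued weight})$ does not match the structure that makes the single-machine duals feasible (there $\al_j$ is the HDF-based marginal increase, not a scaled flow-time). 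Finally, your token argument for the rejection-weight bound is not correct as stated: the same queued mass can witness many consecutive arrivals with high assignment cost, so depositing $\e w_j$ tokens onto it does not yield an $\e W$ bound without further structure---the paper needs the bucket-based ``reject every $(1/\e)^{\text{th}}$ job'' rule and a nontrivial two-invariant charging (\S\ref{sec:bound-weight-single-mc}) precisely to get this bound.
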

Note that we compare with the off-line optimum which is allowed to be preemptive (in fact,
migratory), but is required to process all the jobs. 
Our guarantees are, in fact, stronger. Define the notion of a ``departure
time'' $D_j$ for the job, which is the time at which either the job
completes non-preemptively (in which case $D_j = C_j$) or is the time at
which the job is rejected. A different natural definition of the total
weighted response time in the presence of rejections would be the
following:
\[ \text{total weighted response time } := \sum_j w_j (D_j - r_j). \]
Keeping this quantity small forces us to decide on jobs early, and
discourages us from letting jobs linger in the system for a long time,
only to reject them at some late date. (Such a behaviour would be very
undesirable for a scheduling policy, and would even be considered
``unprofessional'' in real-world settings.)

\medskip\noindent
In fact the bulk of our work is in handling the single machine case.
For this case, we get a slightly stronger bound.
\begin{theorem}[Single Machine]
  \label{thm:main2}
  For the problem of online weighted flow-time minimization on
  a single machine, there is a deterministic algorithm that rejects at most an
  $\e$-fraction of the total weight of incoming jobs, and ensures that
  the total weighted flow time for the remaining jobs is at most 
  $O(1/\e^2)$ factor times the optimal weighted flow time without
  rejections even when the offline optimum  is given $(1+\e)$-extra speedup. 
\end{theorem}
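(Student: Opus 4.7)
The plan is to introduce a non-preemptive Highest-Density-First algorithm with a rejection rule, call it HDF-R, and to analyze it via dual-fitting against the standard preemptive fractional flow-time LP. Order pending jobs by density $d_j := w_j/p_j$. When the machine is idle, start the densest pending job non-preemptively. While a job $j'$ is being processed, maintain the total weight $Q(t)$ of pending jobs of density at least $d_{j'}/\e$; as soon as $Q(t)$ reaches $w_{j'}/\e$, reject $j'$ and restart with the densest pending job. Informally, an in-process job is aborted only when a much heavier pile of strictly higher-density jobs has piled up behind it.

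To bound the rejected weight, observe that each rejected $j'$ is witnessed by a weight-$(w_{j'}/\e)$ set of jobs of density $\ge d_{j'}/\e$. Along a chain of successive rejections in a single busy interval, the witness density grows by a factor of at least $1/\e$ per step, so the rejected weights form a geometric progression dominated by the weight of the final, surviving job. Summing over busy intervals bounds the rejected weight by $O(\e)$ times the processed weight, hence by an $\e$-fraction of the input weight after rescaling $\e$ by a constant.

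For the flow-time bound, I would write the preemptive fractional flow-time LP on a machine of speed $1+\e$, with variables $x_{jt}\ge 0$ for $t\ge r_j$, constraints $\sum_{t\ge r_j} x_{jt} = p_j$ and $\sum_j x_{jt} \le 1+\e$, and objective $\sum_{j,t} x_{jt}\cdot w_j(t-r_j)/p_j$. Its dual has variables $\alpha_j,\beta_t\ge 0$ satisfying $\alpha_j \le \beta_t + w_j(t-r_j)/p_j$ for every $t\ge r_j$, with objective $\sum_j p_j\alpha_j - (1+\e)\sum_t \beta_t$. I would set $\alpha_j := \gamma\cdot w_j(C_j-r_j)/p_j$ for each non-rejected job and $\beta_t := \gamma'\cdot W(t)$, where $W(t)$ is the total pending weight in the algorithm's queue at $t$ and $\gamma, \gamma' = \Theta(\e^2)$; since $\sum_t W(t) = \sum_j w_j(D_j - r_j)$, the positive term $\sum_j p_j\alpha_j$ and the subtracted term $(1+\e)\sum_t\beta_t$ are both $\Theta(\e^2)\cdot\mathrm{ALG}$, and the constants can be balanced so the dual objective is $\Omega(\e^2)\cdot\mathrm{ALG}$, yielding the claimed $O(1/\e^2)$ competitive ratio via weak LP duality.

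The main obstacle is verifying dual feasibility at pairs $(j,t)$ where $j$ has already completed or been rejected by the algorithm by time $t$: when $j$ is still pending, $W(t)\ge w_j$ and the constraint is essentially immediate, but otherwise one must control the algorithm's response time for $j$ by the higher-density pending weight that $W(\cdot)$ accounted for during $[r_j, C_j]$. The classical non-preemptive difficulty---that a single lower-density in-process job can delay up to $p_{j'}$ units of higher-density weight---is exactly what the $(1+\e)$-speedup slack compensates for, via the $\e/(1+\e)$-fraction of capacity that the $(1+\e)\sum_t\beta_t$ term effectively donates back to the dual. Combining this dual feasibility argument with the rejection bound then yields Theorem~\ref{thm:main2}.
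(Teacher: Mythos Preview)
Your dual-fitting does not go through, and the gap is at the step you flagged as ``the main obstacle''. With $\alpha_j=\gamma\,d_j(C_j-r_j)$ and $\beta_t=\gamma'W(t)$, the constraint at $t=r_j$ reads $\gamma\,d_j(C_j-r_j)\le\gamma' W(r_j)$. But $C_j-r_j$ can be driven up by dense jobs that arrive \emph{after} $r_j$, while $W(r_j)$ sees none of them. Concretely: release $j$ with $w_j=p_j=1$ at time $0$, and just after time $0$ release a batch of jobs of total weight $(1/\e)-\delta$ and density exactly $1/\e$. Your rejection rule does not fire (the high-density pending weight never reaches $w_j/\e$), so $j$ runs to time $1$ and each batch job $j'$ has $C_{j'}-r_{j'}\approx 1$, hence $\alpha_{j'}\approx\gamma/\e$ while $\beta_{r_{j'}}\approx\gamma'/\e$. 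Feasibility forces $\gamma\le\gamma'$, yet a positive dual objective requires $\gamma>(1+\e)\gamma'$. These are incompatible. More generally, note that for preemptive HDF one has $\sum_j p_j\alpha_j=\sum_t\beta_t$ exactly, so against a speed-$(1+\e)$ optimum your dual value is $-\e\sum_t\beta_t<0$; no rescaling of $\gamma,\gamma'$ fixes this. The $(1+\e)$ is on the \emph{optimum's} side here, so it enlarges the subtracted term $(1+\e)\sum_t\beta_t$ and gives you \emph{less} slack, not more; your ``donates back'' intuition is the standard one for algorithm-side speedup and does not apply.

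What is missing is exactly the mechanism the paper builds: immediate rejections chosen so that the rejected jobs carry an $\Omega(\e)$ fraction of the total dual mass. The paper sets $\alpha_j$ to be the increase in HDF cost upon $j$'s arrival (a quantity determined at $r_j$, which is why feasibility holds), splits it as $\alpha_j=\alpha_j^++w_jp_j/2+\alpha_j^-$, and buckets jobs by $(\lf\alpha_j^+/w_j\rf,\lf w_j\rf)$ and $(\lf\alpha_j^-\rf,\lf\rho_j\rf,\lf p_j\rf)$, rejecting every $(1/\e)^{\text{th}}$ job per bucket (and the first job in each $T^+$ bucket). This guarantees simultaneously that only an $O(\e)$ weight fraction is rejected \emph{and} that $\sum_{j\in J^{\tt imm}}\alpha_j\ge\Omega(\e)\sum_j\alpha_j$; the latter is what turns the dual objective positive without any speedup for the algorithm. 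Your delayed rejection rule (which is essentially the paper's $L(t)$ rule) handles only the non-preemption issue and cannot by itself produce a positive dual. Also, your stated rejection-weight argument via ``geometric progression of rejected weights'' is not the right justification: densities grow geometrically along a rejection chain but weights need not; the correct (and simpler) argument is that witnesses must arrive during the current job's processing interval, hence witness sets are disjoint.
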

The
fact that we can compare with an optimum offline algorithm which has
faster machine allows us to use known immediate-dispatch algorithms for
the setting of unrelated machines in a black-box manner~\cite{CGKM09, AGK12}.

\subsection{Our Techniques}
\label{sec:techniques}

Let us first focus on the single-machine case. Our algorithm rejects
jobs in two different ways: some of the jobs are rejected immediately
upon arrival, and others are rejected after receiving some processing.
Moreover, assume for the moment that we are running a preemptive
schedule, but without speed-augmentation. The high-level idea is to
reject a ``random'' $\e$-fraction of jobs that come in. At an intuitive
level, this rejects only $\e$-fraction of the weight (although this only
in expectation, whereas we want this to hold deterministically at all
times), and should create the effect of $\e$-speed augmentation.
% \agnote{Talk about it this way?
%   Someone might say, why not use randomization and be done with it?} 
%   \aknote{Best to keep it, because it does work as long as we insist on
%     expected rejection weight.}
To implement this, let $\alpha_j$ be the ``effect'' of job $j$ on the
system---i.e., the increase in the total flow-time of the jobs currently
in the system (assuming no future jobs arrive).  The value of $\alpha_j$
also naturally corresponds to settings of dual variables for a natural
flow-time LP. Using this we can (more-or-less) show that (a)~the
$\alpha_j$ values of the rejected jobs give us a lower bound on OPT,
whereas (b)~the $\alpha_j$ values of the non-rejected jobs upper-bound
our cost. Hence, our goal becomes: at each time cancel \emph{at most} an
$\e$-fraction of the total weight $\sum_j w_j$, while cancelling
\emph{at least} an $\e$-fraction (say) of the total ``dual'' value
$\sum_j \alpha_j$.

A little thought shows that this abstract task is hopeless in general
for any deterministic strategy (say, if the $\alpha$ values rise very
sharply), so we have to take the structure of the $\alpha_j$ values into
account. We do this in two steps: we break the $\alpha_j$ contribution
into $\al_j^+$, the effect of job $j$ on items denser than $j$, and
$\al_j^-$, its effect on less-dense items. Now we put jobs into buckets
based on having the same $(\alpha^+, w)$ or $(\alpha^-, w)$ values, and
rejecting each $1/\e^{th}$ job in each bucket. (The actual bucketing is
a little finer, see \S\ref{sec:single-machine}.) Moreover, we reject
the first job in each $(\alpha^+, w)$ bucket. The complications arise
because we are more aggresive for each such $(\alpha^+, w)$ bucket, and
because we may not have rejected any jobs in the $(\alpha^-, w)$ if it
had less than $1/\e$ items. In \S\ref{sec:lemma-alphaminus} we perform a
delicate charging to relate our aggressive rejections for the former to
the total running time of the jobs, and show that (i)~this aggressive
rejection does not reject too much weight, and (b)~also compensates for
our timid rejections in the latter bucketing.

This high-level argument was done assuming preemptions. Since we want a
non-preemptive schedule, only immediate rejections do not suffice, and
we also must reject some jobs which we have started processing---indeed,
if a large number of high-density (``important'') jobs arrive right
after we start processing some long low-density job $j$, delaying these
more important jobs would cause large flow-time. So we must reject job
$j$. However, as long as the total weight of these new jobs is $w_j/\e$,
we can charge the rejection to these new jobs. This rejection
makes the schedule very ``unstable'' and hence complicates the analysis.
To get around this problem, we \emph{mark} the job $j$ as
``preemptible''. We then run a version of HDF with some preemptible and
other non-preemptible jobs, and show that its performance can also be
related to the LP variables.

Finally, for the multiple machines case we can perform a modular
reduction to the single-machines case. We first use the \emph{immediate
  dispatch} algorithm of Anand et al.~\cite{AGK12} to assign jobs to
machines, assuming speed augmenation. We then show our algorithm does
well even compared to a stronger benchmark (i.e., where the offline
schedule---instead of the online schedule---gets the speed
augmentation). This gives us the theorem for the unrelated machines.

%%% Local Variables:
%%% mode: latex
%%% TeX-master: "main"
%%% End:

\subsection{Related Work}
There has been considerable work on the problem of minimizing total flow-time in the
online setting, though most of it is in the preemptive setting. Several logarithmic competitive
 algorithms are known for unweighted flow-time on identical machines setting~\cite{LR, AvrahamiA03},
and in the related machines setting~\cite{GK06-icalp, Anand13}, 
but there are strong lower bounds for the case of weighted flow-time even on
a single machine~\cite{BC09}. In the restricted assignment settings with preemption, the unweighted flow-time problem
becomes considerably harder even for 3 machines~\cite{CGKM09}. The situation for non-preemptive flow-time is much 
harder. Kellerer et al.~\cite{KTW-journal} showed that one cannot achieve $o(n)$-competitive algorithm even for a single machine. 

Much stronger results are known in the speed augmentation model, where machines in the online algorithm 
have $\e$-fraction more speed than the corresponding machines in the offline setting. This model was first 
proposed by Kalyanasundaram and Pruhs~\cite{KP00} for the problem of non-clairvoyant preemptive total flow-time 
minimization on  a single machine. They gave an $O(1/\e)$-competitive algorithm for this problem. 
Chadha et al.~\cite{CGKM09} gave $O(1/\e^2)$-competitive preemptive algorithm for weighted flow-time in the unrelated 
machines setting. This was extended to the non-clairvoyant setting by Im et al.~\cite{ImKMP14}. However, the non-preemptive
weighted flow-time problem has strong lower bounds in the speed augmentation model even on a single machine~\cite{LucarelliTST16}. 

The rejection model was proposed by Choudhury et al.~\cite{CDGK-jour} in the context of load balancing and
maximum weighted flow-time in the restricted assignment setting.  Lucarelli et al.~\cite{LucarelliTST16} 
considered the non-preemptive scheduling problem 
of minimizing  weighted flow-time in the unrelated machines setting. They showed that one can get $O(1/\e)$-competitive
algorithm if we allow both $(1+\e)$-speed augmentation and rejection of jobs of total weight $\e$-times the total weight. 
Assuming both, we can design a much simpler algorithm and use the dual fitting techniques developed for speed augmentation models
to give a simple analysis of this algorithm (see the comment after Lemma~\ref{lem:opt-lower}). 
Independently of us, 
Lucarelli et al.~\cite{LucarelliTST17} recently announced an algorithm where they can 
remove the speed augmentation assumption for the simpler unweighted setting. 

In the {\em prize-collection} model,   one is allowed to incur a penalty term 
for the rejected jobs. This model has been widely studied, see e.g. Bartal et al.~\cite{BartalLMSS00}, Eppstein et al.~\cite{Epstein14}, and
Bansal et al.~\cite{Bansal03}, though is considerably different from our model because here one can reject a large
fraction of the jobs.

\section{Definitions and Preliminaries}

We consider the unrelated machine scheduling problem, as defined in
\S\ref{sec:introduction}. Our schedules will be non-preemptive. For a schedule
$\calS$, let $C^\calS_j$ denote the completion time of $j$. We use $F^\calS_j$ to
denote the flow-time of $j$, and the objective function is given by
$F^\calS:= \sum_j w_j \cdot F^\calS_j$. We may remove the superscript $\calS$ if
it is clear from the context. We use $\cO$ to denote the optimal
off-line schedule.  In Section~\ref{sec:single-machine}, when
considering the special case of a single machine, we use $p_j$ to denote
the \emph{processing time} of job $j$ (on this machine). Define the
\emph{density} $\rho_j$ of a job as the ratio $w_j/p_j$.
%For the rest of this section, 
We assume that the parameter $\eps$ satisfies $\e^2 \leq 1/2$, and that
$\nf1\e \in \Z$.

\medskip\noindent\textbf{Fractional weighted flow-time.} Given a schedule $\A$, let $p_j(t)$ denote the remaining processing time of 
job $j$ at time $t$ (assuming $t \geq r_j$).  The remaining weight of $j$ at time $t$ is defined as $w_j(t):=\rho_j \cdot p_j(t)$. 
The weighted flow-time of $j$ in this schedule is defined as $w_j(C_j - r_j)$, where $C_j$ is the completion time of $j$. The fractional weighted flow-time
of $j$ is defined as $\sum_{t \geq r_j} w_j(t). $ Since $w_j(t) = 0$ for
$t \not\in [r_j, C_j]$, and $w_j(t) \leq w_j$ for any time $t$, it is clear
that the fractional weighted  flow-time is at most the (integral) weighted flow-time of $j$. The following claim is easy to check. 
\begin{claim}
\label{cl:wtdef}
If a job $j$ is processed without interruption during $[t, t+p_j]$, then
its fractional weighted flow-time is $w_j (t-r_j) + \nf{w_j
  p_j}2$. Moreover, if a job $j$ gets rejected at time $t'$, its
weighted fractional flow-time is at least $\nf{w_j(t'-r_j)}2$.
\end{claim}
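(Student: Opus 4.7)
The plan is to prove both parts of the claim by direct calculation from the definition of fractional weighted flow-time, $\int_{\tau \geq r_j} w_j(\tau)\, d\tau$ where $w_j(\tau) = \rho_j \cdot p_j(\tau)$.

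For the first statement, I would track $p_j(\tau)$ through three regimes under the hypothesis that $j$ runs uninterrupted during $[t, t+p_j]$: (i) for $\tau \in [r_j, t]$, the job is untouched, so $p_j(\tau) = p_j$ and $w_j(\tau) = w_j$; (ii) for $\tau \in [t, t+p_j]$, processing proceeds at unit rate, so $p_j(\tau) = t + p_j - \tau$ and $w_j(\tau) = \rho_j (t + p_j - \tau)$, which is a linear ramp from $w_j$ down to $0$; (iii) for $\tau > t + p_j$, the job is complete and $w_j(\tau) = 0$. Integrating gives $w_j (t - r_j)$ from regime (i) and $\rho_j \cdot p_j^2/2 = w_j \, p_j/2$ from regime (ii), summing to the claimed $w_j(t-r_j) + w_j p_j/2$.

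For the second statement, I would split into two cases depending on whether $j$ has ever been started by the rejection time $t'$. If not, then $p_j(\tau) = p_j$ for all $\tau \in [r_j, t']$, so $w_j(\tau) = w_j$ throughout and the fractional weighted flow-time is already $w_j(t'-r_j) \geq w_j(t'-r_j)/2$. If $j$ \emph{was} started, then because the schedule is non-preemptive, it must have been processed continuously from some start time $s \in [r_j, t']$ up until the rejection at $t'$. Setting $q := t'-s$, the same regime analysis as in Part 1, truncated at $t'$, gives fractional flow-time $w_j(s - r_j) + \int_s^{t'} (w_j - \rho_j(\tau - s))\, d\tau = w_j(t' - r_j) - \rho_j q^2/2$.

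The only nontrivial point is bounding the correction term $\rho_j q^2/2$. Here I use the key inequality $q \leq p_j$: a job that has been processed for more than $p_j$ units would have completed, not been rejected. Thus $\rho_j q^2 = w_j \cdot (q/p_j) \cdot q \leq w_j \cdot q \leq w_j(t' - r_j)$, so the subtracted term is at most $w_j(t'-r_j)/2$, leaving fractional flow-time at least $w_j(t'-r_j)/2$ as required. The main obstacle is really just this last bound, and it hinges entirely on the non-preemptivity of the schedule combined with the fact that a rejected job's processed amount cannot exceed its size.
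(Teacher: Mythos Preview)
Your proposal is correct. The paper does not actually supply a proof of this claim; it simply states that ``the following claim is easy to check'' and moves on. Your direct computation via the three regimes for Part~1 and the case split for Part~2 is exactly the kind of verification the authors had in mind. The crucial observation in Part~2---that a rejected job in a non-preemptive schedule has been processed for at most $p_j$ units and hence $\rho_j q^2/2 \leq w_j(t'-r_j)/2$---is precisely what makes the factor of~$\nf12$ work, and you have identified and justified it cleanly. One minor remark: the paper's definition of fractional flow-time is phrased as a sum over integer times rather than an integral, so in discrete time there may be an off-by-one shift in the first formula; but the continuous-time computation you give matches the stated expression exactly and is the natural reading.
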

Since the integral weighted flow-time of a job as in the claim above is
$w_j (t-r_j)+w_jp_j$, we see the integer and fractional flow times are
within factor of 2 of each other. Thus, for jobs which do not get
preempted, we can argue about weighted fractional flow-time.

\section{Algorithm for Single-Machine Weighted Flow Time}
\label{sec:single-machine}

In this section, we consider the single-machine setting.
For ease of algorithm description, we assume that all quantities are integers so that we can 
schedule jobs at the level of integer time-slots. We first describe an algorithm $\A$ which both rejects and preempts jobs. 
We subsequently show how to modify this algorithm (in an online manner)
to another schedule which only rejects jobs, and does no preemptions.  
%Our algorithm is non-preemptive.
During our algorithm, we shall say that a job $j$ is {\em active} at time $t$ if it has
been released by time $t$, but has not finished processing until time
$t$, and has not been rejected. Let $A(t)$ denote the set of active jobs at time $t$ in our algorithm. 
A subset of these jobs, denoted by $L(t)$, will be special---these jobs
are allowed to be preempted (at time $t$). Once a job enters the set $L(t)$ at some 
time $t$, it stays in $L(t')$ for all subsequent times $t' \geq t$ until
it finishes processing.

For a job $j \in A(t)$ and time $t$, recall that $p_j(t)$  denotes the
remaining processing time.
% and $w_j(t) = w_j \frac{p_j(t)}{p_j}$ to
% denote its fractional weight at time $t$.
%and $\rho_j(t) = \frac{w_j}{p_j(t)}$ to denote its {\em residual density}.
At every point of (integer) time $t$, the algorithm performs the following steps (in this order):
\begin{enumerate}
\item If job $j$ arrives at time $t$, the algorithm may choose to reject
  it immediately upon arrival. We will call such rejections {\em
    immediate rejections.}  If the job is not rejected, it gets added to
  the active set $A(t)$. For the moment, this is the only way in which a
  job gets
  rejected. %If multiple jobs arrive at this time, we perform this step on all such jobs in an arbitrary order.
\item Let $j$ be the job getting processed just before time $t$ (i.e.,
  in the time-slot $[t-1,t]$). If job $j$ was not already in the set
  $L(t)$, the algorithm may move it to the set $L(t)$ if ``many'' jobs
  smaller than $j$ have arrived during its execution. We will specify
  the precise rule soon. Recall that once added, the job $j$ will remain
  in the set $L(t)$ until it finishes.
\item If the job $j$ getting processed in the time-slot $[t-1,t]$ did
  not finish at time $t$ and it is not in $L(t)$, the algorithm will
  continue to process $j$ during the next time-slot $[t,
  t+1]$. Otherwise, if $j$ finishes or $j \in L(t)$, the algorithm 
  chooses a job in $A(t)$ which has the highest density (the HDF rule)
  and processes it during $[t, t+1]$.
\end{enumerate}

Note that if multiple jobs arrive at a time $t$, we consider them in
arbitrary order, and carry out the first two steps above iteratively for
each such job, before executing step~3.
This completes the description
of the algorithm, except that we have not specified the rules for the
first two steps. %\subsection{Details on the Rejection Process}

We first explain the rule for adding a job to $L(t)$.  
Suppose the algorithm processes a job $j$ during $[t-1, t]$, and suppose $j \notin L(t-1)$. 
Let $t'$ be the time when the algorithm started processing $j$. Since it was not allowed to preempt $j$, 
it must have processed $j$ without interruption during $[t',t]$. 
If the total weight of jobs
arriving during $(t',t]$ exceeds $w_j/\eps$, we add job $j$ to the set
$L(t)$.
%% reject $j$ at time $t$. 
The intuition behind this rule is simple---the final algorithm will
eventually reject all jobs which get added to the set $L(t)$, for all
$t$.  We can charge the weight of the rejected job $j$ to the weight of
the jobs which arrived during $[t',t]$. Moreover, consider a job $j$
that does not get added to $L(t)$ over its lifetime. In a preemptive
setting, we may have preempted such a job $j$ on the arrival of a new
shorter job, whereas here we perform such a preemption only when enough
shorter jobs arrive. Since $j$ was not added to $L(t)$, the total weight
of such shorter jobs waiting on $j$ is at most $w_j/\eps$, so we can pay
for the additional flow-time incurred by these shorter jobs (up to an
$1/\eps$ factor) by the flow-time of $j$.

The rule for immediate rejections is more involved. We maintain two tables $T^+$ and
$T^-$. 
Each arriving job may get assigned to either $T^+$ or $T^-$, or
both.  We refer to each entry of these tables as a {\em bucket}. At a
high level, every $(1/\eps)^{th}$ job arriving in each bucket in either
table suffers immediate rejection, though the details differ for the two
tables. Let us elaborate on this further.

With every newly arriving job $j$, we specify a quantity $\alpha_j$, which is the
increase in the total flow-time of all the jobs in the system, assuming
(i)~no further jobs arrive after job $j$, and (ii)~the scheduling
algorithm follows the preemptive HDF policy from $r_j$ onwards for {\em
  all} the jobs in $A(r_j)$.
As in~\cite{AGK12}, we can write an expression 
for $\alpha_j$ as follows. 
\begin{gather}
  \al_j:= \Big( w_j\Sum_{j'\in A(r_j):\; \rho_{j'} \geq \rho_j }
  p_{j'}(r_j) \Big) ~~+~~ \nf{w_jp_j}{2} ~~+~~ \Big( p_j \Sum_{j'\in A(r_j):\;
  \rho_{j'} < \rho_j } w_{j'}(r_j) \Big). \label{eq:alpha}
\end{gather}
We establish the convention that $A(r_j)$ does not contain job
$j$. Moreover, if multiple jobs are released at time
$r_j$, we consider them in arbitrary but fixed order, and add only those
jobs to $A(r_j)$ which are considered before $j$.

For $x \in \R$, let $\lf x \rf$ denote the largest integer
$i$ such that $2^i \leq x$. For a job $j$, define its {\em density-class}
as $\lf \rho_j \rf$. We partition jobs in $A(r_j)$ depending on their
density-class as follows:
\begin{gather}
  D_j^+ := \{ j' \in A(r_j) \mid \lf \rho_{j'} \rf \geq \lf \rho_j \rf
  \} \qquad\text{and}\qquad D_j^- := \{j' \in A(r_j) \mid \lf \rho_{j'}
  \rf < \lf \rho_j \rf \}. \label{eq:8}
\end{gather}
Now let
$\al_j^+$ be the terms in the expression for
$\al_j$ involving jobs in $D_j^+$, and define
$\al_j^-$ similarly. In other words,
\begin{align}
  \al_j^+ &:=\Big( w_j\sum_{j'\in D^+_j\,:\; \rho_{j'} \geq \rho_j
  }p_{j'}(r_j) \Big) + \Big( p_j \sum_{j'\in D_j^+\,:\; \rho_{j'}
  < \rho_j}w_{j'}(r_j) \Big); \label{eq:9} \\
  \al_j^- &:= \Big( p_j \sum_{j'\in D^-_j\,:\; \rho_{j'}
  < \rho_j }w_{j'}(r_j) \Big). \label{eq:10}
\end{align}
Clearly, $\al_j=\al_j^++\nf{w_jp_j}2 +\al_j^-$.
We now specify the definitions of the two tables. 

\begin{itemize}
\item Table $T^+$: Buckets in this table are indexed by ordered pairs of
  integers $(\kappa, \lambda)$. If an
  arriving job $j$ satisfies $\al_j^+ \geq w_j p_j/\eps$, we assign it
  to the bucket indexed $(\lf \nf{\al_j^+}{w_j}\rf,\lf w_j\rf)$ in this
  table, and add it to the set $J^+$ of jobs assigned to $T^+$.  For
  each bucket, we cancel the first job that is assigned to that bucket,
  and then every $(1/\e)^{th}$ subsequent job assigned to it.
\item Table $T^-$: Buckets in this table are indexed by ordered triplets
  of integers $(\gamma, \delta, \eta)$.  Each arriving job which
  satisfies $\al_j^- > w_jp_j/\eps$ is assigned to the bucket indexed
  $(\lf \al_j^- \rf,\lf \rho_j \rf,\lf p_j \rf)$, and added to the set
  $J^-$ of jobs assigned to $T^-$. For each bucket, cancel every
  $(1/\e)^{th}$ job assigned to this bucket. Note the subtle difference
  with respect to $T^+$: here the first job to be canceled in a bucket
  is the $(1/\e)^{th}$ job assigned to it.
\end{itemize}

\subsection{The Final Algorithm $\B$} The actual online algorithm $\B$
is almost the same as $\A$, except when the algorithm $\A$ processes a
job in $L(t)$ during time-slot $[t, t+1]$, the algorithm $\B$ idles,
leaving this slot empty. In other words, when a job being executed is
added to $L(t)$, the algorithm $\B$ rejects the job instead of
eventually finishing it, perhaps after some preemptions. (We can think
of this as being a \emph{delayed rejection}, as opposed to the
\emph{immediate rejection} that $\A$ performs based on the above
bucketing strategy.) Clearly, we can implement $\B$ in an online manner.

\section{Analyzing the Single-Machine Algorithm}

In this section, we provide the analysis of our single-machine algorithm $\m B$. Naturally, the two main steps are to show that (i) an $O(\e)$ fraction of jobs by weight get rejected, and (ii) the total flow time is competitive with the optimal offline algorithm.

Showing (i) is relatively straightforward: a rejected job is either immediately rejected or is later rejected in $\m B$ due to its preemption in $\m A$. We will show that the rejected jobs falling under each of the two categories is an $O(\e)$ fraction by weight, with a separate analysis for each category. Both of the analyses are in Section~\ref{sec:bound-weight-single-mc}.

To show flow time competitiveness of algorithm $\m B$, we instead focus on bounding the tota flow time of algorithm $\m A$.
 By Claim~\ref{cl:wtdef}, the total (integer)
flow-time of jobs that $\B$ does not reject is within a factor of two of their
fractional flow-time in $\A$, since these are precisely the jobs that
$\A$ does not preempt. Therefore, to prove Theorem~\ref{thm:main2}, it suffices to show that $\m A$ is $O(1/\e^2)$ factor competitive with the optimal offline algorithm.

Let $J^\imm$ denote the set
of jobs which get rejected immediately upon arrival, and let $\cO$ denote the optimal
offline schedule and $F^\cO$ its fractional weighted flow time. Roughly speaking, our goal is to establish the following chain of approximate inequalities:
\begin{gather} \e F^{\m A} \lesssim \e\sum_j\al_j \lesssim 
\sum_{j\in J^\imm}\al_j \lesssim F^\cO , \label{eq:chainIneq}\end{gather}
where $\lesssim$ hides additive $\sum_j\nicefrac{w_jp_j}{\e^{O(1)}}$ factors. Since $F^\cO \geq \sum_j w_jp_j/2$, these additive losses still provide a $\nicefrac1{\e^{O(1)}}$ competitive ratio.

For the first inequality, we will bound the flow time of algorithm $\m A$, modulo an additive $\sum_j\nicefrac{w_jp_j}{\e}$ factor, by the sum of $\al_j$ over all jobs $j\notin J^\imm$, which are precisely the jobs that are finished by $\m A$. We do so by exploiting the facts that the $\al_j$ values indicate an increase in flow time to an HDF algorithm, and that $\m A$ is ``approximately'' an HDF algorithm. The details are in Lemma~\ref{lem:algA}.

The second inequality is the most technically involved section of the paper. Not only does the immediate rejection scheme reject an $O(\e)$ fraction of jobs, but it also rejects jobs constituting an $\e$ fraction of the total $\al_j$ value. The analysis is in Section~\ref{sec:alpha-terms}.

Finally, the last inequality relates the optimal offline flow time to the sum of the $\al_j$ values of immediately rejected jobs. It is restated as Lemma~\ref{lem:opt-lower} and proved in the appendix.

\subsection{Bounding Weight of Rejected Jobs}
\label{sec:bound-weight-single-mc}

In this section, we show that the total weight of rejected jobs is only an $O(\e)$ fraction of total. Recall that
jobs either suffer immediate rejection, or are added to $L(t)$ for some
time $t$, and hence suffer delayed rejection.

Let us first bound the total weight of the set $L:= \cup_t L(t)$. For a
job $j$ in $L(t)$, let $s_j$ be the first time when it gets processed
and $l_j$ be the time at which it enters the set $L(t)$. Since $j$ must
be processed uninterrupted in this interval $(s_j, l_j]$, the intervals
associated with different jobs are disjoint. Moreover job $j$ entered
$L(t)$ because the total weight of jobs released during $(s_j, l_j]$ is
at least $w_j/\e$. Thus the total weight of jobs in $L$ can be upper
bounded by $\e$ times the weight of all the jobs. 

We now account for the weight of jobs which are rejected immediately on
arrival. For job $j$, let $\lf w_j \rf$ denote the {\em weight-class} of
this job. Jobs assigned to a bucket in $T^+$ have the same weight-class,
by construction of the buckets. Jobs assigned to a bucket in $T^-$ have
the same $\lf \rho_j \rf$ and $\lf p_j \rf$, which pins down their
weight $w_j = \rho_j \cdot p_j$ up to a factor of $4$. This gives us the
following facts:

\begin{itemize}
\item Since we reject every $(1/\e)^{th}$ job in each bucket of $T^-$,
  the total weight of jobs in $J^-$ which get rejected immediately is at
  most $4\e$ times the weight of all jobs in $J^-$.
\item Let $\jof$ be the subset of jobs in $J^+$ which happen to be the
  first jobs to be assigned to their respective buckets in $T^+$. Then
  the weight of all jobs in $J^+ \setminus \jof$ which get rejected
  immediately on arrival is at most $2\e$ times the total weight of all
  the jobs in $J^+$.
\end{itemize}

So it remains to account for the items items in $\jof$, which are all
rejected. Recall that a job in $J^+$ is assigned to the bucket indexed
$(\lf \al_j^+/w_j \rf, \lf w_j \rf)$ in $T^+$.  Jobs in $\jof$ are
assigned to distinct buckets in $T^+$. Fix an integer $\gamma$, and let
$J^\gamma$ denote the jobs in $\jof$ which are mapped to a bucket
indexed $(\gamma, \kappa)$ for some $\kappa$. The jobs in $J^\gamma$
have distinct weight-classes and so it suffices to bound the weight of
the highest weight job in $J^\gamma$---let this heaviest job be
$j^\gamma$. Let $S$ denote the set of such jobs $j^\gamma$ as we range
over all $\gamma$. Jobs in $S$ have distinct $\lf \al_j^+/w_j \rf$
values. Let $\Gamma= \{\gamma_1 < \gamma_2 < \ldots < \gamma_k\}$ be the
integers $\gamma$ for which there is a job $j^\gamma \in S$, and let the
corresponding jobs in $S$ be called $j_1, j_2, \ldots, j_k$.

Now starting from the smallest index in $\Gamma$, we charge each job
$j_r \in S$ to a subset of jobs of total weight at least $w_{j_r}/\e$.  The
job $j_r$ may charge to a job fractionally---if it charges to a fraction
$\delta$ of some job $j$, then it can only use $\delta w_j$ amount of
weight of $j$ for its charging (and we say that ``\emph{$j_r$ charges to
  $\delta p_j$ size of this job $j$}''). Of course, we need to ensure
that the total fraction charged to a job is at most 1. We inductively
maintain the following invariant for all $r \in 1\ldots k$:
\begin{itemize}
\item The job $j_r$ charges to jobs of total (fractional)
  weight at least $\nf{w_{j_r}}{8\e}$.
\item Jobs $j_1, \ldots, j_r$ charge to jobs of total (fractional) size
  at most $2^{\gamma_r}$.
\end{itemize}
Assuming these invariants hold for $r-1$, we show that they hold for $r$
as well. Let $\rho^\star := \lf \rho_{j_r} \rf$ be the density class for job
$j_r$. By $j_{r}$'s choice of bucket, $\lf \alpha_{j_{r}}/w_{j_r} \rf =
\gamma_r$, so 
\begin{gather}
  \al_{j_{r}}^+ \geq 2^{\gamma_{r}} \cdot w_{j_r}. \label{eq:11}
\end{gather}
Recall from~(\ref{eq:8}) that $D_{j_{r}}^+$ is the set of jobs of
density class $\rho^\star$ or higher which are active at the time $j_{r}$
is released. Let $P_r := \sum_{j \in D_{j_{r}}^+} p_j$ be the total
processing time of these jobs. By (\ref{eq:9}), it follows that
\begin{gather}
  \al_{j_r}^+ \leq w_{j_r} P_r. \label{eq:12}
\end{gather}
Combining~(\ref{eq:11}) and~(\ref{eq:12}), $P_r \geq 2^{\gamma_{r}}$.
By the second invariant, the first $r$ jobs $j_1, \ldots, j_{r-1}$ have only
charged to jobs of total size at most $2^{\gamma_{r-1}}$, so we can find
jobs in $D_{j_r}^+$ of total (fractional) size
$2^{\gamma_{r}}-2^{\gamma_{r-1}} \geq 2^{\gamma_{r}-1} $ which have not
been charged yet, and charge to them. This proves the second invariant.

To prove the first invariant, we know that $\al_{j_r}^+ \geq w_{j_r}
p_{j_r}/\e$, else $j_r$ would not be assigned to $T^+$. Moreover,
$\al_{j_r}^+ \leq w_{j_r} 2^{\gamma_{r}+1}$ by the bucketing, so
$2^{\gamma_{r}} \geq p_{j_r}/2\e$. Consequently, we charge to jobs of
total size at least $2^{\gamma_{r}-1} \geq p_{j_r}/4\e$, and these jobs
have density class at least $\rho^\star$. Since $2\rho^\star \geq
\rho_{j_r} = w_{j_r}/p_{j_r}$, we get their total (fractional) weight is at
least $w_{j_r}/8\e$. This proves the first invariant, and hence the
following theorem.

\begin{theorem}[Few Rejections]
  \label{thm:wt-rej}
  The weight of jobs suffering immediate rejection, plus those in
  $\cup_t L(t)$, is at most an $O(\e)$ fraction of the weight of all
  jobs released.
\end{theorem}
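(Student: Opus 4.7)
The plan is to split the rejected jobs into three groups and bound each group separately by $O(\e)$ times the total weight: (a) the delayed rejections $L = \cup_t L(t)$, (b) the immediately rejected jobs in $J^- \cup (J^+ \setminus \jof)$, and (c) the immediately rejected ``first-in-bucket'' jobs in $\jof$.

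For (a), I would exploit the way a job $j$ gets added to $L(t)$: such a $j$ must have been processed without interruption over an interval $(s_j, l_j]$, and this insertion happens only when the total weight of jobs released during $(s_j, l_j]$ exceeds $w_j/\e$. Since these processing intervals are disjoint across different jobs in $L$ (a new job only starts when the previous one is either completed or added to $L$), charging each $j \in L$ to the jobs released during $(s_j, l_j]$ uses each unit of weight at most once, giving total weight of $L$ at most $\e$ times the total weight. For (b), the bucketing is designed so that within any single bucket the weights are nearly uniform: jobs in a $T^+$-bucket share the weight-class $\lf w_j \rf$, while jobs in a $T^-$-bucket share both $\lf \rho_j \rf$ and $\lf p_j \rf$ and hence have weights within a factor of $4$. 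Since we reject only a $(1/\e)$-fraction of the jobs in each bucket, summing this bound gives weights $O(\e)$ times the total within $J^-$ and within $J^+ \setminus \jof$.

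The hard part, as the preamble suggests, is (c): jobs in $\jof$ are \emph{every} first arrival to a $T^+$-bucket, so the ``$1/\e$-th job'' accounting no longer applies, and we need to produce a charging scheme. Here I would first collapse each fixed $\gamma$-row of $T^+$ to a single heaviest representative $j^\gamma$ (the weight-classes $\lf w_j \rf$ within such a row are distinct, so the weight of the row is within a factor of $2$ of $w_{j^\gamma}$). Now the representatives $j_1, \ldots, j_k$ have distinct $\lf \al_j^+/w_j \rf$ values $\gamma_1 < \cdots < \gamma_k$, and the goal reduces to charging each $j_r$ to jobs of total (fractional) weight $\Omega(w_{j_r}/\e)$ in a non-overlapping way.

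To produce these charges, I would process the $j_r$'s in increasing order of $\gamma_r$ and maintain two inductive invariants: first, that the charges assigned to $j_1, \ldots, j_r$ use fractional total size at most $2^{\gamma_r}$; second, that each $j_r$ receives charges of fractional weight at least $w_{j_r}/(8\e)$. The combinatorial content comes from two bounds on $\al_{j_r}^+$: the lower bound $\al_{j_r}^+ \geq 2^{\gamma_r} w_{j_r}$ from the bucketing, and the upper bound $\al_{j_r}^+ \leq w_{j_r} P_r$ from expression \eqref{eq:9}, where $P_r$ is the total remaining processing time in $D_{j_r}^+$. Combining these gives $P_r \geq 2^{\gamma_r}$, so even after deducting the at most $2^{\gamma_{r-1}} \leq 2^{\gamma_r - 1}$ size already charged, I can still find unused processing time of size at least $2^{\gamma_r - 1}$ inside $D_{j_r}^+$. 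Since these chargeable jobs have density-class $\geq \lf \rho_{j_r}\rf$ and since the threshold $\al_{j_r}^+ \geq w_{j_r} p_{j_r}/\e$ forces $2^{\gamma_r} \geq p_{j_r}/(2\e)$, this chargeable size converts into fractional weight $\Omega(w_{j_r}/\e)$, closing the induction. Summing the invariants and combining with (a) and (b) proves the theorem; the main obstacle is ensuring both invariants simultaneously, which is precisely what ties the bucket index $\gamma_r$ to the physical quantity $P_r$.
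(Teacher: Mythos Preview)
Your proposal is correct and follows essentially the same approach as the paper: the same three-way split into delayed rejections, per-bucket $(1/\e)$-th rejections, and the first-in-bucket set $\jof$, with the latter handled by collapsing each $\gamma$-row to its heaviest representative and running the same inductive charging scheme with the two invariants (cumulative charged size $\le 2^{\gamma_r}$, charged weight $\ge w_{j_r}/(8\e)$) driven by the bounds $2^{\gamma_r} w_{j_r} \le \al_{j_r}^+ \le w_{j_r} P_r$ and the membership threshold $\al_{j_r}^+ \ge w_{j_r} p_{j_r}/\e$. One tiny quibble: the weight of a $\gamma$-row is within a factor of~$4$ (not~$2$) of its heaviest job, since distinct weight-classes give a geometric sum; this is immaterial for the $O(\e)$ conclusion.
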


\subsection{Bounding the Weighted Fractional Flow-time}
\label{sec:bound-flow-time}

Next we show that the total fractional flow-time of $\A$ can be bounded
in terms of total $\alpha_j$ values.
We first focus on relating $F^{\m A}$ to the sum of the $\al_j$ values, as described in (\ref{eq:chainIneq}).

Observe that $\alpha_j$ denotes the increase in objective function due
to the arrival of $j$ if we had followed the preemptive HDF policy for
all the jobs from time $r_j$ onwards. However, we follow a slightly
different policy---if $j'$ denotes the job that was running on the
machine at time $j$'s release time $r_j$, we let $j'$ run until it
finishes, or else until $j'$ belongs to the set $L(t')$ at some time
$t' \geq r_j$. If no further jobs are released after $j$, the HDF policy
after this time $t'$ would be non-preemptive.  Thus, we would still
expect that the total fractional weighted flow-time of our algorithm to
be close to $\sum_j \alpha_j$. We formalise this intuition now. For
every job $j$, we define a job $\phi(j)$ as follows: let $j'$ be the job
which was running just before time $r_j$ (i.e., in the slot
$[r_j-1, r_j]$). If $j' \notin L(r_j)$, we define $\phi(j)$ to be $j'$,
otherwise we leave $\phi(j)$ undefined.  Our policy for adding a job to
the set $L(t)$ ensures that for every job $j$, $w(\phi^{-1}(j))$ is at
most $w_j/\eps$.~\footnote{For a set $S$ of jobs, let $w(S)$ denote the
  total weight of jobs in $S$.} Recall that $J^\imm$ is the set
of jobs which get rejected immediately upon arrival. The following lemma
states that the fractional weighted flow-time of the algorithm
can be charged to the $\alpha_j$ values of the jobs which get immediately rejected. 

\begin{lemma}
  \label{lem:algA}
The fractional weighted flow-time of $\A$  is at most $\sum_{j: j \notin J^\imm} \alpha_j + \sum_j w_j p_j /\eps. $
\end{lemma}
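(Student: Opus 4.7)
The plan is to use a telescoping (potential-style) argument. For each time $t$, let $H(t)$ denote the fractional weighted flow-time that $\A$ would incur from time $t$ onwards on the current active set $A(t)$, under the assumption that no further jobs arrive after $t$. In this no-arrival hypothetical $\A$'s policy is tame: the currently-running non-$L$ job $j^*$ (if any) can never be added to $L$ later, because the $L$-trigger only fires on arrivals, so $\A$ runs $j^*$ to completion and then serves the remaining jobs non-preemptively in HDF order. Hence $H(t)$ is a well-defined function of $A(t)$, with $H(0)=H(\infty)=0$. Between consecutive integer times with no arrival, $H$ decreases by exactly the queue weight $Q_\A(t):=\sum_{j\in A(t)} w_j(t)$, since $\A$'s one-slot step agrees with the no-arrival hypothetical. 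At an arrival $r_j$ with $j\notin J^\imm$ define $\tilde\alpha_j:=H(r_j^+)-H(r_j^-)$, and set $\tilde\alpha_j:=0$ when $j\in J^\imm$. Telescoping all these updates gives
\[ F^{\A} \;=\; \sum_{j\notin J^\imm}\tilde\alpha_j. \]

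The second step is to compare $\tilde\alpha_j$ with the paper's $\alpha_j$, which is the analogous jump under preemptive HDF on $A(r_j)\cup\{j\}$; but in the no-arrival world preemptive and non-preemptive HDF coincide, so $\alpha_j$ is precisely what we would get if $\A$ were allowed to immediately switch off $\phi(j)$. Thus the two differ only because $\A$ first finishes $\phi(j)$. If $\phi(j)$ is undefined then $\A$'s hypothetical schedule is already HDF on $A(r_j)\cup\{j\}$, so $\tilde\alpha_j=\alpha_j$. If $\phi(j)=j^*$ with $\rho_{j^*}\geq \rho_j$, then $j^*$ would be served before $j$ under HDF anyway, and a term-by-term match between~\eqref{eq:alpha} and the expression for $\tilde\alpha_j$ again gives $\tilde\alpha_j=\alpha_j$. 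The only substantive case is $\phi(j)=j^*$ with $\rho_{j^*}<\rho_j$: here $\A$'s hypothetical delays $j$ by $p_{j^*}(r_j)$, whereas HDF instead charges the symmetric ``$j$-preempts-$j^*$'' term $p_j\,w_{j^*}(r_j)$. A direct computation using~\eqref{eq:alpha} yields
\[ \tilde\alpha_j - \alpha_j \;=\; p_{j^*}(r_j)\bigl(w_j - p_j\,\rho_{j^*}\bigr) \;\leq\; w_j\cdot p_{j^*}. \]

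It remains to sum this slack. Grouping by the image $j^*$ and invoking the already-recalled inequality $w(\phi^{-1}(j^*))\leq w_{j^*}/\e$ (a direct consequence of the $L(t)$-insertion rule) gives
\[ \sum_{j:\ \phi(j)\ \text{defined}} w_j\, p_{\phi(j)} \;=\; \sum_{j^*} p_{j^*}\cdot w\bigl(\phi^{-1}(j^*)\bigr) \;\leq\; \sum_{j^*}\frac{w_{j^*}\,p_{j^*}}{\e}. \]
Combining this with $F^{\A}=\sum_{j\notin J^\imm}\tilde\alpha_j$ and the case analysis gives the lemma. The main conceptual care I expect to need is in pinning down the no-arrival hypothetical cleanly---in particular verifying that in this hypothetical the job $j^*$ is indeed never added to $L$ (so the ``run $j^*$ then HDF'' interpretation of $H$ is legitimate), and treating the bookkeeping carefully when several jobs arrive at the same integer time.
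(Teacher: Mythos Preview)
Your proposal is correct and follows essentially the same strategy as the paper: the paper defines $\Delta_j := F^{\A}(J_1 \cup \{j\}) - F^{\A}(J_1)$ where $J_1$ is the prefix of jobs before $j$, and telescopes $F^\A = \sum_j \Delta_j$; since $\A$ on $J_1$ sees no arrivals from $r_j$ on, this $\Delta_j$ is exactly your $\tilde\alpha_j$, and the paper's two cases (``$\phi(j)$ undefined'' vs.\ ``$\phi(j)$ defined'') match yours, with the same bound $\Delta_j \le \alpha_j + w_j\,p_{\phi(j)}$ and the same regrouping $\sum_j w_j p_{\phi(j)} = \sum_{j^*} p_{j^*}\,w(\phi^{-1}(j^*)) \le \sum_{j^*} w_{j^*}p_{j^*}/\e$. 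Your sub-split on whether $\rho_{j^*} \gtrless \rho_j$ is a touch finer than the paper's uniform bound but leads to the identical inequality.
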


\begin{proof}
  Jobs in $J^\imm$ get rejected immediately, so their flow-time is 0. We
  now consider the jobs which are not immediately rejected in the rest
  of the proof.  Consider the jobs in order of increasing release
  times. Let $\Delta_j$ denote the increase in the objective function
  value due to arrival of $j$.
  %\JLnote{for the instance consisting only of jobs before $j$}
  In other words, if $J_1$ is the set of
  jobs released before $j$, then $\Delta_j$ equals the total fractional
  weighted flow-time of $\A$ on the input $J_2 := J_1 \cup \{j\}$ minus
  that on the input $J_1$. The total weighted flow time of $\A$ on the
  entire input would be $\sum_j \Delta_j$, the sum of these
  increases. We now show that
  \begin{gather}
    \Delta_j \leq \alpha_j + w_j p_{\phi(j)}. \label{eq:1}
  \end{gather}
  Since $w(\phi^{-1}(j')) \leq w_{j'}/\eps$, we get that $\sum_j
  w_j p_{\phi(j)} = \sum_{j'} w(\phi^{-1}(j')) p_{j'} \leq \sum_{j'}
  w_{j'} p_{j'}/\e$. Hence, summing~(\ref{eq:1}) over all $j$ which are not in 
  $J^\imm$ proves the lemma. 

  Now we prove~(\ref{eq:1}). Since we will be dealing with two inputs,
  $J_1$ and $J_2$, we parameterise all quantities by $J_1$ or $J_2$ to
  clarify which input we refer to. For example, $A(J_k, t), k=1, 2$ will
  refer to the active set $A(t)$ on input $J_k$. % , $L(J_k, t)$ would
  % denote $L(t)$ for the corresponding input, $C(J_k, j)$ will refer to
  % the completion time of job $j$ on input $J_k$, and so on.
  Let
  $F(J_k, t)$ denote the fractional weighted flow-time of jobs in
  $A(J_k, t)$ {\em beyond} time $t$, i.e.,
  $F(J_k, t) := \sum_{t' \geq t} \sum_{j \in A(J_k, t')} w_j(t')$.

  There are two cases when job $j$ arrives. If $\phi(j)$ is undefined,
  the job $j'$ running in slot $[r_j-1,r_j]$ belongs to $L(r_j)$. Hence
  the algorithm $\A$ on both inputs $J_1, J_2$ just runs HDF starting at
  time $r_j$. The difference between the corresponding flow times is
  precisely $\alpha_j$, by definition.

  Otherwise $\phi(j)$ is well-defined.  Since $j$ is the latest arrival,
  the job $\phi(j)$ will not be preempted, and runs to completion. Say
  job $\phi(j)$ completes at time $t'$. During the time $[r_j, t']$ the
  difference in fractional weighted flow-time between the two runs is
  precisely $w_j\cdot(t' - r_j) \leq w_j p_{\phi(j)}$. After time $t'$
  we run HDF on the remaining jobs, and the difference in the fractional
  weighted flow-time of the two runs is precisely what $\alpha_j$ would
  have been had $j$ arrived at time $t'$ instead of time $r_j$. In other
  words, if $J' := A(J_k, r_j) \setminus \{\phi(j)\}$,
  \begin{align*}
    F(J_2, t') - F(J_1, t') &= \nf{w_j p_j}2 + \sum_{j' \in J':
    \rho_{j'} \geq \rho_j} w_{j} p_{j'}(t') + \sum_{j' \in J':
                              \rho_{j'} < \rho_j} w_{j'}(t') p_{j} \\
                            &= \nf{w_j p_j}2 + \sum_{j' \in J':
    \rho_{j'} \geq \rho_j} w_{j} p_{j'}(r_j) + \sum_{j' \in J':
                              \rho_{j'} < \rho_j} w_{j'}(r_j) p_{j} 
  \end{align*}
  But this is a subset of the terms of $\alpha_j$: indeed, we're just
  missing the term corresponding to job $\phi(j)$. Hence, the total
  difference is at most $\alpha_j + w_jp_{\phi(j)}$, proving~(\ref{eq:1}).
\end{proof}
%}

To bound our flow time against the optimum using this lemma, note that
$\sum_j w_j p_j/\e \leq 2F^\cO/\e$, where we recall that $\cO$ denotes the optimal
offline schedule, and $F^\cO$ its fractional weighted flow time. So we
just need to bound
$\sum_j \alpha_j= \sum_j \nf{w_j p_j}2 + \sum_j \al_j^+ + \sum_j \al_j^-$. The
first term is again bounded by $F^\cO$, so the work is in bounding the other two
terms. We first record a convenient lemma -- its proof is based on LP duality arguments
and construction of dual variables are similar to those in~\cite{AGK12}.

\begin{lemma}[Duality-based Lower Bound on OPT]
  \label{lem:opt-lower}
  $\sum_{j \in J^\imm} \alpha_j   \leq F^\cO + \sum_j \nf{w_j p_j}{\e}$.
\end{lemma}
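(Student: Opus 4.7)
The plan is to apply LP duality, as the hint suggests. Consider the natural LP relaxation for preemptive fractional weighted flow-time on a single machine of speed $1+\e$ (reflecting the speedup granted to $\cO$ in Theorem~\ref{thm:main2}). Letting $x_{jt}$ denote the amount of job $j$ processed in the time-slot starting at $t$, the primal reads
\[
\min\; \sum_{j}\sum_{t\geq r_j} w_j\left(\tfrac{t-r_j}{p_j}+\tfrac12\right) x_{jt}\ \text{ s.t. }\ \sum_{t} x_{jt}\geq p_j\ \forall j,\ \sum_j x_{jt}\leq 1+\e\ \forall t,\ x_{jt}\geq 0,
\]
so any feasible schedule for $\cO$ induces a primal solution of value at most $F^\cO$. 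Its dual has nonnegative variables $\beta_j, z_t$, objective $\sum_j p_j\beta_j - (1+\e)\sum_t z_t$, and constraints $\beta_j \leq w_j\big(\tfrac{t-r_j}{p_j}+\tfrac12\big) + z_t$ for every $t\geq r_j$.

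I would set $\beta_j = \alpha_j/p_j$ for $j\in J^\imm$ and $\beta_j=0$ otherwise, so that $\sum_j p_j\beta_j = \sum_{j\in J^\imm}\alpha_j$ exactly. Using the decomposition $\alpha_j = \alpha_j^+ + w_jp_j/2 + \alpha_j^-$ from \eqref{eq:9} and \eqref{eq:10}, the dual constraint for $(j,t)$ with $j\in J^\imm$ reduces to $p_j z_t \geq \alpha_j^+ + \alpha_j^- - w_j(t-r_j)$, whose right-hand side is positive only on the interval $I_j := [r_j,\, r_j+L_j]$ of length $L_j=(\alpha_j^++\alpha_j^-)/w_j$, decaying linearly from $\alpha_j^++\alpha_j^-$ at $t=r_j$ down to $0$ at the right endpoint. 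The plan is to define $z_t$ using a family of phantom preemptive HDF schedules, one for each $j\in J^\imm$ run on $A(r_j)\cup\{j\}$ starting at $r_j$, and to take $z_t$ to be the pointwise maximum over $j$ of the corresponding remaining-weight profiles (calibrated so that the dual constraint at $t=r_j$ holds). One then verifies: (a) dual feasibility on every pair $(j,t)$ with $j\in J^\imm$, $t\geq r_j$ (the constraints for $j\notin J^\imm$ hold trivially since $\beta_j=0$), and (b) $\sum_t z_t = O(\sum_j w_jp_j/\e)$. Weak LP duality then yields
\[ \sum_{j\in J^\imm}\alpha_j \;=\; \sum_j p_j\beta_j \;\leq\; F^\cO + (1+\e)\sum_t z_t \;\leq\; F^\cO + \sum_j \tfrac{w_jp_j}{\e}. \]

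I expect (b) to be the main obstacle: summing the per-job profiles independently would give $\sum_t z_t \approx \sum_{j\in J^\imm}(\alpha_j^++\alpha_j^-)^2/(w_jp_j)$, which is far too large. The argument must instead exploit the structural property that $J^\imm \subseteq J^+ \cup J^-$, so every $j\in J^\imm$ satisfies either $\alpha_j^+ \geq w_jp_j/\e$ or $\alpha_j^- \geq w_jp_j/\e$; and moreover that the intervals $I_j$ across different $j \in J^\imm$ overlap enough that the pointwise-max $z_t$ can be charged against the actual processing time of the jobs in $\cup_j A(r_j)$ during those intervals, delivering the $\sum_j w_jp_j/\e$ budget. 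This charging step is similar in spirit to the rejection-to-processing-time charging already used in \S\ref{sec:alpha-terms}, and is where the technical work resides.
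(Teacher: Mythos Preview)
Your plan has a genuine gap at step~(b), and the paper's proof takes a fundamentally different route that avoids it.

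\textbf{Why (b) fails.} With your choice $\beta_j=0$ for $j\notin J^\imm$, dual feasibility forces, for each $j\in J^\imm$ and $t\geq r_j$,
\[
z_t \;\geq\; \frac{\alpha_j^+ + \alpha_j^- - w_j(t-r_j)}{p_j},
\]
so the area under the triangle for a single $j$ is $(\alpha_j^++\alpha_j^-)^2/(2w_jp_j)$, and taking a pointwise maximum cannot reduce this below the \emph{largest} such triangle. Consider $N$ unit jobs ($w=p=1$) all released at time~$0$. The algorithm rejects an $O(\e)$-fraction, so some job $j^\star$ near the end of the arrival order has $\alpha_{j^\star}^+=\Theta(N)$ and lands in $J^\imm$ (it is first in its $T^+$ bucket). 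Its triangle alone has area $\Theta(N^2)$, whereas $\sum_j w_jp_j/\e = N/\e$. For $N\gg 1/\e$ the gap is unbounded, so no charging of the kind you sketch can close it. The overlap/charging idea from \S\ref{sec:alpha-terms} goes the other direction (it shows certain \emph{rejected} quantities are small relative to processing time), and does not give control on $\sum_t z_t$.

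\textbf{What the paper does instead.} The paper keeps the dual variables $\alpha_j$ for \emph{all} jobs $j$, not just $J^\imm$, and sets $\beta_t=\sum_{j\in A(t)} w_j(t)$, the remaining fractional weight in the \emph{actual} run of $\A$. Feasibility is checked exactly as you would (comparing to the HDF profile from $r_j$, which lower-bounds $\beta_t$). The point is that now $\sum_t\beta_t$ equals the fractional flow-time of $\A$, which is \emph{not} small in absolute terms---but Lemma~\ref{lem:algA} already gives $\sum_t\beta_t\le \sum_{j\notin J^\imm}\alpha_j + \sum_j w_jp_j/\e$. Hence the dual objective satisfies
\[
\sum_j \alpha_j - \sum_t \beta_t \;\geq\; \sum_{j\in J^\imm}\alpha_j - \sum_j \frac{w_jp_j}{\e},
\]
and weak duality finishes. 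The key idea you are missing is this \emph{cancellation}: the non-rejected $\alpha_j$'s are deliberately left in the objective precisely so that they absorb the (large) $\sum_t\beta_t$ term via Lemma~\ref{lem:algA}. Zeroing them out, as you do, destroys the cancellation and leaves you with an unpayable $\sum_t z_t$ bill.

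(As a minor remark, the LP here should have a speed-$1$ packing constraint; the $(1{+}\e)$-speed version is Lemma~\ref{lem:opt-lower-2}.)
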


\begin{proof}
        Consider the linear program for fractional weighted flow time (note that the variables $x_{j,t}$ are
        only defined for $t \geq r_j$):
  \begin{alignat*}{2}
    \min  \ts \sum_{t, j} w_j \left( \frac{t - r_j}{p_j} + \frac12 \right) & \, x_{t,j} &\qquad\qquad & \\
    \ts \sum_t \nf{x_{t,j}}{p_j} &\geq 1  & & \forall \text{ jobs } j\\
    \ts\sum_j x_{t,j} &\leq 1 & & \forall \text{ times } t \\
    x_{t,j} &\geq 0. && \forall j,t
  \end{alignat*}
  The dual is
  \begin{alignat*}{2}
    \max \ts \sum_j \alpha_j &- \ts \sum_t \beta_t  & \qquad\qquad &\\
    \ts \f{\alpha_j}{p_j} - \beta_t &\leq \ts \frac{w_j (t - r_j)}{p_j}
    +  \frac{w_j}2
    &&\forall j,t \\
    \alpha_j, \beta_t &\geq 0. &&
  \end{alignat*}
  Weak duality implies that any feasible dual solution value is at most
  the optimal primal solution value, which in turn is at most
  $F^\cO$. Define $\alpha_j$ as above, and let
  $\beta_t := \sum_{j \in A(t)} w_j(t)$ be the total fractional weight
  in the system at time $t$. Therefore $\sum_t \beta_t$ is the total
  weighted fractional flow-time of $\A$. Lemma~\ref{lem:algA} upper
  bounds
  $\sum_t \beta_t \leq \sum_{j \not\in J^\imm} \alpha_j + \sum_j
  \nf{w_jp_j}{\e}$, and the dual objective function is at least
  $\sum_{j} \alpha_j - \sum_t \beta_t \geq \sum_{j \in J^\imm} \alpha_j
  - \sum_j \nf{w_jp_j}{\e}$. The desired result will follow once we
  prove that the dual variables are feasible.
  
  To show feasibility, consider job $j$ released at time $r_j$, and a
  time $t \geq r_j$.  Let $\beta_t'$ denote the total remaining weight
  of jobs at time $t'$ if no jobs arrive after $j$ and we run preemptive
  HDF from time $r_j$ onwards. (Recall that $\alpha_j$ captures the
  increase in fractional weighted flow-time due to arrival of $j$
  precisely in this scenario). We show that the dual constraint for
  the pair $j,t$ is satisfied with $\beta_t$ replaced by
  $\beta_t'$. This suffices because HDF has the property that at any
  time during the schedule, the residual weight of jobs is minimized
  compared to any other algorithm, and hence $\beta_t \geq \beta_t'$.
   
  Now we consider running HDF on $A(r_j)$ (excluding $j$) from time $r_j$ onwards. HDF orders these
  jobs according to density---let this ordering be $\prec$. 
  Suppose HDF
  processes a job $j'$ at time $t$. Two cases arise: (i) $j'$ appears before $j$ in the order $\prec$, or (ii) it appears
  after $j$ in this ordering. Consider case~(i) first. By splitting $j'$ into two parts (each of which has the same density
  as that of $j'$), we can assume that HDF starts processing  $j'$ at time $t$. Therefore, $t-r_j = \sum_{j'' \prec j'} p_{j''}(r_j)$. 
  Therefore, 
  \begin{align*}
  \frac{w_j (t-r_j)}{p_j} + \beta_t' & =  \rho_j \cdot \sum_{j'': j'' \prec j'} p_{j''}(r_j) + \sum_{j'': j'\preceq j''} w_{j''}(r_j) 
  & \geq \rho_j \cdot \sum_{j': j'' \prec j} p_{j''}(r_j) + \sum_{j'': j \prec j''} w_{j''}(r_j),
  \end{align*}
  where we have used the fact that if $j''$ satisfies
  $j' \preceq j'' \prec j$, then $\rho_{j''} \geq \rho_j$ and so,
  $w_{j''}(r_j) \geq \rho_j \cdot p_{j''}(r_j)$. The RHS above is
  precisely $\frac{\al_j}{p_j} - \frac{w_j}{2}$, which is what is wanted
  to prove. 
  
  For case~(ii), again assume that the algorithm just started processing $j'$ at time $t$. As above,
  \begin{align*}
  \frac{w_j (t-r_j)}{p_j} + \beta_t' & =  \rho_j \cdot \sum_{j'': j'' \prec j'} p_{j''}(r_j) + \sum_{j'': j'\preceq j''} w_{j''}(r_j) 
  & \geq \rho(j) \cdot \sum_{j': j'' \prec j} p_{j''}(r_j) + \sum_{j'': j \prec j''} w_{j''}(r_j),
  \end{align*}
     where we use the fact that if $j''$ satisfies $j \prec j''$, then $\rho_j \cdot p_{j''}(r_j) \geq \rho_{j''} \cdot p_{j''}(r_j) = w_{j''}(r_j). $
     As before, the RHS is precisely $\frac{\al_j}{p_j} -
     \frac{w_j}{2}$. This proves dual feasibility, and hence the lemma. 
   \end{proof}

If we were to also assume $(1+\e)$-speed augmentation, we can strengthen
the lower bound on $F^\cO \geq (1/\e) \sum_j \alpha_j$. Combined with
Lemma~\ref{lem:algA}, this immediately shows that the algorithm is
constant competitive---we do not even need any immediate
rejections to get this result.

\subsection{Controlling the $\alpha$ Terms}
\label{sec:alpha-terms}

In this section, our goal is to establish the approximate inequality $\e\sum_j\al_j \lesssim 
\sum_{j\in J^\imm}\al_j$, introduced in (\ref{eq:chainIneq})%
and made precise in Corollary~\ref{cor:alphas}.

\begin{lemma}
  \label{lem:plus}
  $\sum_j \alpha_j^+ \leq O(1/\eps) \cdot \big(\sum_j w_jp_j + \sum_{j \in
    J^\imm} \alpha_j^+ \big)$.
\end{lemma}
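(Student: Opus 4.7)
The plan is to split the sum $\sum_j \alpha_j^+$ into contributions from jobs in $J^+$ (those assigned to table $T^+$) and jobs not in $J^+$. For $j \notin J^+$, the assignment rule gives us $\alpha_j^+ < w_j p_j / \eps$ directly, so these contribute at most $(1/\eps) \sum_j w_j p_j$, handling one of the two terms on the right-hand side for free.

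The substantive work is to show $\sum_{j \in J^+} \alpha_j^+ = O(1/\eps) \sum_{j \in J^\imm} \alpha_j^+$, and here I would work bucket-by-bucket in $T^+$. Fix a bucket indexed $(\gamma,\kappa)$ and let it contain $n$ jobs. By construction, every job $j$ in this bucket has $\lfloor w_j \rfloor = \kappa$ and $\lfloor \alpha_j^+/w_j \rfloor = \gamma$, so $w_j \in [2^\kappa, 2^{\kappa+1})$ and $\alpha_j^+/w_j \in [2^\gamma, 2^{\gamma+1})$. Consequently, any two $\alpha_j^+$ values in the bucket differ by a factor of at most $4$. Thus if $\alpha^\star$ denotes the minimum $\alpha_j^+$ in the bucket, the total $\alpha_j^+$ mass in the bucket is at most $4n\alpha^\star$, while any single rejected job in the bucket contributes at least $\alpha^\star$.

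The rejection rule for $T^+$ rejects the very first job and then every $(1/\eps)^{th}$ subsequent job, so the number of rejected jobs in a bucket of size $n$ is at least $\max(1, n\eps)$. In either regime (whether $n \leq 1/\eps$ or $n > 1/\eps$) we therefore get
\[
\sum_{j \in \text{bucket} \cap J^\imm} \alpha_j^+ \;\geq\; \max(1, n\eps) \cdot \alpha^\star \;\geq\; \frac{\eps}{4} \cdot \sum_{j \in \text{bucket}} \alpha_j^+.
\]
Summing over all buckets gives $\sum_{j \in J^+} \alpha_j^+ \leq (4/\eps) \sum_{j \in J^+ \cap J^\imm} \alpha_j^+ \leq (4/\eps) \sum_{j \in J^\imm} \alpha_j^+$, and combining with the easy bound for jobs outside $J^+$ yields the lemma.

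The main obstacle is conceptual rather than computational: the bound has to hold even for buckets that are too small (size less than $1/\eps$) to get a proportional $\eps$-fraction rejection from the periodic rule alone. This is precisely why the algorithm's $T^+$-rejection rule is designed to reject the \emph{first} job in each $T^+$ bucket unconditionally (unlike in $T^-$); that single ``seed'' rejection is what lets us charge $\frac{\eps}{4}$ of the bucket's mass to an immediately rejected job regardless of bucket size. Once this is noticed the case analysis collapses to the uniform estimate above.
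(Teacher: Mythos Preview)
Your proof is correct and follows essentially the same approach as the paper: split into $j\notin J^+$ (where $\alpha_j^+ < w_jp_j/\eps$ by the assignment rule) and $j\in J^+$, then argue bucket-by-bucket that all $\alpha_j^+$ values in a fixed $T^+$ bucket lie within a factor~$4$ of each other and that at least an $\eps$-fraction of the bucket's jobs are immediately rejected. Your explicit use of $\max(1,n\eps)$ and the accompanying remark about why rejecting the \emph{first} job in each $T^+$ bucket is essential for small buckets makes the role of that design choice clearer than the paper's terse ``we immediately reject at least $\eps n_B$ jobs,'' but the argument is the same.
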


\begin{proof}
  The definition of $J^+$ implies that
  $\sum_{j \notin J^+} \al_j^+ \leq \sum_{j \notin J^+} w_j p_j/\e$.
  It remains to bound $\sum_{j \in J^+} \al_j^+$.  We do an accounting
  per bucket in $T^+$. Fix a bucket $B$ indexed by a pair
  $(\kappa, \lambda)$, i.e., all jobs $j$ in this bucket have
  $\lf \nf{\al_j^+}{w_j}\rf = \kappa, $ and $\lf w_j\rf =
  \lambda$. Hence, if $j$ is any job in this bucket, then
  $2^{\kappa} \leq \nf{\al_{j}^+}{w_{j}} \leq 2^{\kappa+1}, $ and
  $2^{\lambda} \leq w_j \leq 2^{\lambda+1}$. Multiplying,
  $2^{\kappa + \lambda} \leq \al_j^+ \leq 4 \cdot 2^{\kappa + \lambda}$,
  i.e., the $\al_j^+$ values of any two jobs in this bucket differ by a
  factor of at most 4.

  Let $J_B$ denote the jobs in $J^+$ assigned to this bucket $B$, and $n_B$
  denote their cardinality $|J_B|$.  Since we reject the first job and
  then every subsequent $(1/\eps)^{th}$ job in $J_B$, we immediately
  reject at least $\eps\, n_B$ jobs in $J_B$.  Therefore,
  $$ \sum_{j \in J_B} \al_j^+ \leq \frac{4}{\eps} \cdot  \sum_{j \in J_B
    \cap J^\imm} \al_j^+. $$ Summing over all buckets, the lemma follows. 
\end{proof}

\begin{lemma}
  \label{lem:minus}
  $\sum_j \alpha_j^- \leq O(1/\eps) \cdot
  \big(\sum_j w_jp_j + \sum_{j \in
    J^\imm} \alpha_j \big)$. 
\end{lemma}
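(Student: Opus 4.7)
The plan is to decompose $\sum_j \alpha_j^-$ into three parts: (i)~jobs $j \notin J^-$, (ii)~jobs in $J^-$ assigned to ``large'' $T^-$ buckets (those containing at least $1/\eps$ jobs), and (iii)~jobs in $J^-$ assigned to ``small'' $T^-$ buckets (fewer than $1/\eps$ jobs). The first two cases will mirror the structure of Lemma~\ref{lem:plus}, while the small-bucket case is the main obstacle because the $T^-$ rejection rule skips the first $1/\eps-1$ jobs in every bucket and so contributes nothing to $J^\imm$ for these buckets.

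First, for $j \notin J^-$, the defining condition of $J^-$ gives $\alpha_j^- \leq w_j p_j/\eps$ directly, contributing at most $\sum_j w_j p_j /\eps$ in total. Next, for any large bucket $B$ of $T^-$ indexed by $(\gamma,\delta,\eta)$, all jobs $j \in J_B$ satisfy $\alpha_j^- \in [2^{\gamma}, 2^{\gamma+1})$, so their $\alpha_j^-$ values agree up to a factor of two. Since at least $\eps|J_B|$ of these jobs are immediately rejected (every $(1/\eps)^{th}$ one), one obtains $\sum_{j \in J_B} \alpha_j^- \leq O(1/\eps) \sum_{j \in J_B \cap J^\imm} \alpha_j^-$, in exact analogy with the argument in Lemma~\ref{lem:plus}.

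The hard part will be the small buckets: a small bucket $B$ indexed by $(\gamma,\delta,\eta)$ with $n_B < 1/\eps$ jobs produces no immediate rejections in $T^-$, yet has total contribution at most $n_B \cdot 2^{\gamma+1} < 2^{\gamma+2}/\eps$ that must be charged elsewhere. The key structural fact I would exploit is that each $j \in J^-$ witnesses an accumulated less-dense weight $W_j \geq w_j/\eps$ at time $r_j$, since $\alpha_j^- = p_j W_j > w_j p_j/\eps$. These less-dense jobs have been pending in the system, and the plan is to charge $B$'s contribution either to $\sum_j w_j p_j$ terms (accounting for the eventual processing of these pending less-dense jobs) or to the $\alpha_j^+$ value of some job in $J^\imm \cap J^+$ (leveraging the aggressive \emph{first-job} rejection in each $T^+$ bucket, which, per \S\ref{sec:techniques}, is designed precisely to compensate for the timid $T^-$ rule). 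The delicate piece of the argument, foreshadowed in \S\ref{sec:techniques}, is a charging scheme that associates to each small bucket a distinct ``chunk'' of either processing time $p_{j'}$ of a less-dense pending job $j'$, or a distinct aggressively-rejected $T^+$ bucket, so that every such chunk absorbs contributions from only $O(1)$ small $T^-$ buckets.

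Combining the three cases and using $\alpha_j^- \leq \alpha_j$ yields the claimed bound $\sum_j \alpha_j^- \leq O(1/\eps)\bigl(\sum_j w_j p_j + \sum_{j \in J^\imm} \alpha_j\bigr)$.
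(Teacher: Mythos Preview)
Your three-way decomposition is essentially the paper's: cases (i) and (ii) are correct and match Lemma~\ref{lem:plus}'s template, and your ``small-bucket'' case is equivalent to the paper's $\jif$ term (the first job placed in each $T^-$ bucket). Indeed, a bucket $B$ with $n_B<1/\e$ jobs contributes at most $2n_B\cdot\al^-_{j_1}\le (2/\e)\al^-_{j_1}$ where $j_1\in\jif$ is its first job, so bounding your small-bucket sum is the same task as bounding $\sum_{j\in\jif}\al_j^-$, which is exactly what the paper isolates in inequality~(\ref{eq:13}) and Claim~\ref{clm:minus-parttwo}.

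The gap is that you have not supplied any mechanism for the hard case; you only describe the \emph{shape} of a charging argument (``associate each small bucket to a distinct chunk of $p_{j'}$ or a distinct aggressively-rejected $T^+$ bucket'') without constructing one or arguing why such distinct chunks must exist. This is precisely where the real work lies. The paper's route is also different from what you sketch: rather than charging small buckets directly to jobs in $J^\imm\cap J^+$, it proves $\sum_{j\in\jif}\al_j^-\le O(\e)\big(\sum_j w_jp_j+\sum_j\al_j^+\big)$ --- i.e.\ it charges to the \emph{total} $\al^+$ over all jobs, not just rejected ones --- and only afterwards invokes Lemma~\ref{lem:plus} as a black box to convert $\sum_j\al_j^+$ into $O(1/\e)\sum_{j\in J^\imm}\al_j^+$. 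The intermediate bound is obtained through a global quantity $\sum_\delta P^\delta W^\delta$ (maximum pending processing times weight per density class), via two lemmas: one showing $\sum_\delta P^\delta W^\delta\le O(1)(\sum_j w_jp_j+\sum_j\al_j^+)$, and one showing $\sum_{j\in\jif}\al_j^-\le O(\e)\sum_\delta P^\delta W^\delta$ by exploiting that jobs in $\jif$ have distinct $(\lf\rho_j\rf,\lf p_j\rf)$ coordinates together with a geometric decay across density classes. None of this machinery is visible in your proposal; your ``distinct chunk'' heuristic does not obviously yield it, and a direct per-bucket charging to individual $T^+$ rejections would have to confront the possibility that many small $T^-$ buckets see the \emph{same} accumulation of less-dense weight.
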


\begin{proof}
  The argument is similar to Lemma~\ref{lem:plus} in spirit, but
  technically more involved. The reason is that we do not remove any
  jobs from a bucket of $T^-$ until it has $1/\eps$ jobs assigned to
  it. Hence, for a bucket $B$, if $J_B$ is non-empty but
  $|J_B| \leq 1/\e$, we have $J_B \cap J^{\imm} = \emptyset$. However,
  if \jif is the set of jobs in $J^-$ which are the first jobs assigned to
  their corresponding buckets in $T^-$, then we get (as in the proof of Lemma~\ref{lem:plus})
  that
  \begin{gather}
    \sum_j \al_j^- \leq O(1/\eps) \cdot \Bigg( \sum_j w_jp_j + \sum_{j
        \in J^\imm} \al_j^- + \sum_{j \in \jif} \al_j^-\Bigg). \label{eq:13}
  \end{gather}
  It remains to bound $\sum_{j \in \jif} \al_j^-$, which we accomplish via the following claim. Since the proof is more technical, we defer it to the next section.
  \begin{claim}
    \label{clm:minus-parttwo}
    $\sum_{j \in \jif} \al_j^- \leq O(\e) \cdot \big( \sum_j w_jp_j +
    \sum_j \alpha_j^+ \big)$.
  \end{claim}
  Combining this with~(\ref{eq:13}) and  Lemma~\ref{lem:plus}, using that
  $\al_j^+ + \nf{w_jp_j}2 + \al_j^- = \al_j$, the lemma follows.
\end{proof}

Using Lemmas~\ref{lem:plus} and~\ref{lem:minus}, we obtain the desired relation between $\sum_j\al_j $ and $\sum_{j\in J^\imm}\al_j$.
\begin{corollary}
  \label{cor:alphas}
  $\sum_j \alpha_j \leq O(1/\e) \cdot (\sum_{j \in J^\imm} \alpha_j + \sum_j w_jp_j)$. 
\end{corollary}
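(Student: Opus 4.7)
The plan is to observe that Corollary~\ref{cor:alphas} is a direct combination of Lemma~\ref{lem:plus} and Lemma~\ref{lem:minus}, using the decomposition
\[
\al_j = \al_j^+ + \tfrac{w_j p_j}{2} + \al_j^-
\]
which appears right after equation~(\ref{eq:10}). So first I would sum this identity over all jobs $j$, obtaining
\[
\sum_j \al_j = \sum_j \al_j^+ + \tfrac12\sum_j w_j p_j + \sum_j \al_j^-.
\]

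Next I would apply Lemma~\ref{lem:plus} to the first summand to get $\sum_j \al_j^+ \leq O(1/\e)\bigl(\sum_j w_j p_j + \sum_{j\in J^\imm}\al_j^+\bigr)$. Since $\al_j^+ \leq \al_j$ (as $\al_j^+, \al_j^-$ and $w_jp_j/2$ are all non-negative), we can weaken this to $\sum_j \al_j^+ \leq O(1/\e)\bigl(\sum_j w_j p_j + \sum_{j\in J^\imm}\al_j\bigr)$, which is exactly of the desired form. Similarly, Lemma~\ref{lem:minus} already gives the bound $\sum_j \al_j^- \leq O(1/\e)\bigl(\sum_j w_j p_j + \sum_{j\in J^\imm}\al_j\bigr)$ directly in the form we want. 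The middle term $\tfrac12\sum_j w_j p_j$ is trivially bounded by $\sum_j w_j p_j$, which fits in the additive error.

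Adding the three bounds and collecting terms yields
\[
\sum_j \al_j \leq O(1/\e)\Bigl(\sum_j w_j p_j + \sum_{j\in J^\imm}\al_j\Bigr),
\]
as claimed. There is no real obstacle here: the entire technical content has already been absorbed into Lemmas~\ref{lem:plus} and~\ref{lem:minus} (and in particular into Claim~\ref{clm:minus-parttwo}, which is the only nontrivial piece still to be proved in this section). The corollary itself is just a one-line bookkeeping step.
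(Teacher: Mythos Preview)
Your proposal is correct and matches the paper's own approach: the paper does not even write out a proof, simply stating that the corollary follows from Lemmas~\ref{lem:plus} and~\ref{lem:minus} via the decomposition $\al_j = \al_j^+ + \nf{w_jp_j}2 + \al_j^-$. Your bookkeeping step of replacing $\sum_{j\in J^\imm}\al_j^+$ by $\sum_{j\in J^\imm}\al_j$ using non-negativity is exactly what is implicitly intended.
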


Finally, we put together the bounds on $\al_j$, establishing the chain of inequalities as described in (\ref{eq:chainIneq}) and bounding the competitive ratio of algorithm $\m A$.

\begin{theorem}
  \label{thm:flow-single}
  The fractional weighted flow-time of the non-rejected jobs in $\A$ is $O(F^\cO/\e^2)$.
\end{theorem}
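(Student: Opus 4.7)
The plan is to simply chain together the three main ingredients already established: Lemma~\ref{lem:algA}, Corollary~\ref{cor:alphas}, and Lemma~\ref{lem:opt-lower}, using the elementary lower bound $\sum_j w_jp_j \leq 2 F^\cO$ (which follows from Claim~\ref{cl:wtdef}, since the optimum must pay at least $w_j p_j /2$ fractional flow-time for each job it processes).

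First I would start from Lemma~\ref{lem:algA} to bound
\[ F^\A \;\leq\; \sum_{j \notin J^\imm} \alpha_j + \frac{1}{\e}\sum_j w_j p_j \;\leq\; \sum_j \alpha_j + \frac{1}{\e}\sum_j w_j p_j. \]
Then I would invoke Corollary~\ref{cor:alphas} to replace $\sum_j \alpha_j$ by $O(1/\e)\big(\sum_{j\in J^\imm}\alpha_j + \sum_j w_jp_j\big)$, and use Lemma~\ref{lem:opt-lower} to bound $\sum_{j\in J^\imm}\alpha_j$ by $F^\cO + \frac{1}{\e}\sum_j w_jp_j$. Substituting in,
\[ F^\A \;\leq\; O(1/\e)\Big(F^\cO + \tfrac{1}{\e}\sum_j w_jp_j + \sum_j w_jp_j\Big) + \tfrac{1}{\e}\sum_j w_jp_j. \]

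Finally I would apply $\sum_j w_jp_j \leq 2F^\cO$ to each occurrence of $\sum_j w_jp_j$ on the right-hand side. The dominant term comes from $O(1/\e) \cdot \tfrac{1}{\e}\sum_j w_jp_j \leq O(F^\cO/\e^2)$, and all other terms are absorbed into this, yielding $F^\A \leq O(F^\cO/\e^2)$ as desired.

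There is really no obstacle: every nontrivial step has been done in the preceding lemmas, and the proof is a mechanical combination. The only minor care needed is to track the factors of $1/\e$ coming from each ingredient so that the final exponent is exactly $2$ and not $3$; in particular, one must observe that Lemma~\ref{lem:algA} contributes only a single $1/\e$ loss on the additive $\sum_j w_jp_j$ term, and Corollary~\ref{cor:alphas} contributes another $1/\e$ factor, giving $1/\e^2$ overall.
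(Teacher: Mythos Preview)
Your proposal is correct and follows essentially the same route as the paper: both chain Lemma~\ref{lem:algA}, Corollary~\ref{cor:alphas}, and Lemma~\ref{lem:opt-lower} in that order and finish with $\sum_j w_jp_j \leq 2F^\cO$. The accounting of the $1/\e$ factors is also the same, yielding the $O(1/\e^2)$ bound.
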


\begin{proof}
  By Lemma~\ref{lem:algA}, the fractional weighted flow-time of the
  non-rejected jobs in $\A$ is at most $\sum_j (\alpha_j +
  w_jp_j/\e)$. This  is bounded by
  $O(1/\e) \cdot (\sum_{j \in J^\imm} \al_j + \sum_j w_jp_j)$ by
  Corollary~\ref{cor:alphas}. Finally, Lemma~\ref{lem:opt-lower}
  bounds this by $O(1/\e) \cdot (F^\cO + \sum_j w_jp_j/\e)$. Since $F^\cO \geq
  \sum_j w_jp_j/2$, this completes the proof.
\end{proof}

\subsubsection{Proof of Claim~\ref{clm:minus-parttwo}}
\label{sec:lemma-alphaminus}

In this section, we prove Claim~\ref{clm:minus-parttwo}, bounding the
$\alpha_j^-$ value of $\jif$. For brevity, 
define
$\Lambda^+ := \sum_j \al_j^+$.  Recall that for a job $j$, its density
class is given by $\lf \rho_j \rf = \lf \nf{w_j}{p_j} \rf$. For
each density class $\delta \in \Z$, let us define some notation:
\begin{itemize}
\item Let $A^\delta(t) := \{ j \in A(t) \mid \lf \rho_j \rf = \delta\}$
  denote jobs in $A(t)$ whose density class is $\delta$.  
\item Let $P^\delta(t) := \sum_{j \in A^\delta(t)} p_j(t)$ and
  $W^\delta(t) := \sum_{j \in A^\delta(t)} w_j(t)$ be the total
  processing time and residual weight of jobs in $A^\delta(t)$,
  respectively. Since all jobs in this set have the same density class,
  observe that $\frac{W^\delta(t)}{P^\delta(t)}$ also lies in the range
  $[2^\delta, 2^{\delta+1})$.  
\item Define $P^\delta := \max_t P^\delta(t)$
  and $W^\delta := \max_t W^\delta(t)$.
\end{itemize}
Our proof shows that $\sum_{\delta} P^\delta W^\delta$ is small; then we bound $\sum_{j \in \jif} \al^-_j$ by
$\sum_{\delta} P^\delta W^\delta$.

\begin{lemma}
  \label{lem:densityclass}
   $\sum_{\delta} P^\delta
  W^\delta \leq O(1) \cdot \left( \sum_j w_j p_j + \Lambda^+ \right). $ 
\end{lemma}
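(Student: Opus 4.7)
The plan is, for each density class $\delta$, to show that $2^\delta (P^\delta)^2$ is bounded by a small piece of the right-hand side, and then sum over $\delta$. This suffices because every job $j\in A^\delta(t)$ has $\rho_j\in[2^\delta,2^{\delta+1})$, so $W^\delta(t)\le 2^{\delta+1}P^\delta(t)\le 2^{\delta+1}P^\delta$, giving $P^\delta W^\delta \le 2^{\delta+1}(P^\delta)^2$ for each $\delta$.

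Fix $\delta$. Since $P^\delta(\cdot)$ can only jump up at the arrival of a class-$\delta$ job and otherwise decreases during processing, pick $t^\delta$ to be a time just after such an arrival at which $P^\delta(t^\delta)=P^\delta$. Let $j_1,\dots,j_k$ be the class-$\delta$ jobs in $A^\delta(t^\delta)$ in order of release, and set $q_a:=p_{j_a}(t^\delta)$, so $\sum_a q_a = P^\delta$. For any $a<b$, $j_a$ is still active at $r_{j_b}\le t^\delta$ and has $\lf\rho_{j_a}\rf=\lf\rho_{j_b}\rf=\delta$, so $j_a\in D_{j_b}^+$. A case split on whether $\rho_{j_a}\ge\rho_{j_b}$ or not shows that $\al_{j_b}^+$ contains either the term $w_{j_b}p_{j_a}(r_{j_b})$ or the term $p_{j_b}w_{j_a}(r_{j_b})=p_{j_b}\rho_{j_a}p_{j_a}(r_{j_b})$; using $\rho_{j_a},\rho_{j_b}\ge 2^\delta$ together with $p_{j_a}(r_{j_b})\ge q_a$ and $p_{j_b}\ge q_b$, each case contributes at least $2^\delta q_b q_a$. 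Summing over pairs gives $\sum_b \al_{j_b}^+ \ge 2^\delta\sum_b q_bQ_b$ where $Q_b:=\sum_{a<b}q_a$. Telescoping via $Q_{b+1}^2-Q_b^2=q_b(2Q_b+q_b)$ yields $\sum_b q_bQ_b=\tfrac12(P^\delta)^2-\tfrac12\sum_b q_b^2$.

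The main obstacle is handling $\sum_b q_b^2$, which could be as large as $(P^\delta)^2$ if a single $q_{b^\star}$ dominates. I resolve this by a dichotomy. If $\max_b q_b\le P^\delta/2$, then $\sum_b q_b^2\le \tfrac12(P^\delta)^2$, so $\sum_b \al_{j_b}^+ \ge \tfrac{2^\delta}{4}(P^\delta)^2$. Otherwise, for the maximizer $j_{b^\star}$ we have $p_{j_{b^\star}}\ge q_{b^\star}>P^\delta/2$, and so $w_{j_{b^\star}}p_{j_{b^\star}}=\rho_{j_{b^\star}}p_{j_{b^\star}}^2\ge \tfrac{2^\delta}{4}(P^\delta)^2$. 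In either case,
\[
2^\delta(P^\delta)^2 \le O(1)\cdot\Bigl(\textstyle\sum_b \al_{j_b}^+ \;+\; w_{j_{b^\star}}p_{j_{b^\star}}\Bigr).
\]

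Finally, summing this inequality over $\delta$, the job collections $\{j_1,\dots,j_k\}$ are disjoint across density classes (each $j_b$ has density class $\delta$), and so are the singletons $\{j_{b^\star}\}$. Hence the right-hand side sums to at most $O(1)\cdot(\Lambda^+ + \sum_j w_jp_j)$, and combining with $P^\delta W^\delta\le 2^{\delta+1}(P^\delta)^2$ completes the proof.
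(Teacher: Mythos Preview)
Your proof is correct and follows essentially the same approach as the paper: fix a time $t$ realizing $P^\delta$, order the class-$\delta$ jobs active then by arrival, and lower-bound $\al_{j_b}^+$ by the pairwise products $2^\delta q_a q_b$ coming from earlier jobs $j_a\in D_{j_b}^+$; the cross terms then yield a quadratic in $P^\delta$.

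The only cosmetic difference is how the ``diagonal'' is handled. The paper adds $w_{j_\ell}p_{j_\ell}$ for \emph{every} job in $A^\delta(t)$ and then invokes the elementary inequality $\sum_\ell n_\ell(n_\ell+\cdots+n_k)\ge \tfrac14(\sum_i n_i)^2$, which absorbs the diagonal uniformly without a case split. You instead isolate the diagonal via a dichotomy on whether a single $q_{b^\star}$ dominates, charging only the one term $w_{j_{b^\star}}p_{j_{b^\star}}$ in that case. Both routes give the same bound up to constants; the paper's is slightly slicker in that it avoids the case analysis, while yours is a bit more parsimonious in that it charges only one $w_jp_j$ term per density class.
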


\begin{proof}
  Let us first prove an analogous statement for any fixed time $t$,
  which we can then extend to prove the desired statement.

\begin{claim}
  \label{clm:densityclass}
  For any time $t$ and density class $\delta$,
  $ P^\delta(t) W^\delta(t) \leq O(1) \cdot \sum_{j \in A^\delta(t)} (w_j p_j + 
    \al_j^+ ).$
\end{claim}

\begin{subproof}
  To this end, arrange the jobs in $A^\delta(t)$ in \emph{decreasing} order
  $j_1, \ldots, j_k$ of their arrival time.  % For an index
  % $\ell$ define $p_{\leq \ell}$ as $p_{j_1} + \ldots +
  % p_{j_{\ell}}$. Define $w_{\leq \ell}, w_{\geq \ell} ,$ etc.
  % similarly.  
  At $r_{j_\ell}$, the arrival time of $j_{\ell}$, all jobs in
  $J_{\ell+1} := \{j_{\ell+1}, \ldots, j_k\}$ are in $A(r_{j_\ell})$ but
  $j_\ell$ is not. % All these jobs have residual density at most
  % $2^{\delta+1}$ at this time, since their residual density at a later
  % time $t$ is also at most $2^{\delta+1}$. 
  Consider an arbitrary job $j' \in J_{\ell+1}$. The contribution of
  $j'$ towards $\al_{j_{\ell}}^+$ is at least the minimum of
  $p_{j'}(r_{j_\ell}) w_{j_\ell}$ and $p_{j_\ell}
  w_{j'}(r_{j_\ell})$. Since both $j'$ and $j_\ell$ have the same
  density class, this is at least $\nf{p_{j_\ell}
    w_{j'}(r_{j_\ell})}2$. The residual weight is non-increasing over
  time and $r_{j_\ell} \leq t$, so this is at least
  $\nf{p_{j_\ell} w_{j'}(t)}2$.

  Summing over all $j' \in J_{\ell+1}$ (and adding in $w_{j_\ell} p_{j_\ell}$)
  \begin{gather}
    \alpha_{j_\ell}^+ + w_{j_\ell} p_{j_\ell} \geq p_{j_\ell} 
      \sum_{j \in J_\ell} w_j(t)/2 \geq 2^{\delta-1} \; w_{j_\ell}(t) \sum_{j
        \in J_\ell} w_j(t).
  \end{gather}
  Summing over $\ell=1, \ldots, k$
  $$\sum_{\ell = 1}^k \left( \alpha_{j_\ell}^+ + w_{j_\ell} p_{j_\ell}
  \right) \geq 2^{\delta-1} \, \sum_{\ell=1}^k w_{j_\ell}(t) \sum_{i 
        = \ell}^k w_{j_i}(t) \geq  
  2^{\delta-1} \cdot \frac{W^\delta(t)^2}4 \geq \f{P^\delta(t) W^\delta(t)}{16}. $$
  The second inequality above uses the fact that if $n_1, \ldots, n_k$ are positive reals, then 
  \begin{gather*}
  \sum_{\ell=1}^k n_\ell \cdot (n_{\ell} + \ldots + n_k) \geq 1/4 \cdot
    (n_1 + \ldots + n_k)^2. \qedhere
  \end{gather*}
\end{subproof}
Let $t$ and $t'$ be such that $P^\delta = P^\delta(t)$ and
$W^\delta = W^\delta(t')$. Since all jobs in
$A^\delta(t') \cup A^\delta(t)$ have densities within factor of 2 of
each other,
$W^\delta(t) \geq 2^\delta\, P^\delta(t) \geq 2^\delta\, P^\delta(t')
\geq W^\delta/2$. The result now follows from
Claim~\ref{clm:densityclass}, and observing that $w_jp_j$ and
$\alpha_j^-$ are both non-negative for the remaining jobs.
\end{proof}

\begin{lemma}
  \label{lem:alminus}
  $\sum_{j \in \jif} \alpha_j^- \leq O(\e) \cdot \sum_\delta P^\delta W^\delta$.
\end{lemma}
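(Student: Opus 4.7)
The plan is to do a bucket-by-bucket accounting of $\sum_{j \in \jif} \al_j^-$, analogous to the proof of Lemma~\ref{lem:plus}, with the factor of $\e$ coming from the defining inequality $\al_j^- > w_j p_j/\e$ of $J^-$. Each $\jif$ job sits in a unique $T^-$-bucket $(\gamma, \delta, \eta) = (\lf \al_j^- \rf, \lf \rho_j \rf, \lf p_j \rf)$, so I can organize the sum by these three parameters.

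First I fix $(\delta, \eta)$ and sum over $\gamma$. The $\al_j^-$ values from distinct $\gamma$-buckets lie in $[2^\gamma, 2^{\gamma+1})$, so this is a geometric series bounded by $O(\max_\gamma \al_j^-)$. The elementary upper bound $\al_j^- = p_j \sum_{\delta' < \delta} W^{\delta'}(r_j) \leq 2^{\eta+1} W^{<\delta}$, where $W^{<\delta} := \sum_{\delta' < \delta} W^{\delta'}$, yields a contribution of $O(2^\eta W^{<\delta})$ per pair $(\delta, \eta)$.

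Next I fix $\delta$ and sum over $\eta$; this is where $\e$ enters. Since $\jif \subseteq J^-$, every such job satisfies $\al_j^- > w_j p_j/\e$, and combining this with $\al_j^- \leq p_j W^{<\delta}$ and $w_j \geq 2^{\delta + \eta}$ gives the crucial bound $2^\eta \leq \e\, W^{<\delta}/2^\delta$. Summing $2^\eta$ over $\eta$ is another geometric series bounded by $O(\e\, W^{<\delta}/2^\delta)$, so the contribution of density class $\delta$ is $O(\e (W^{<\delta})^2/2^\delta)$, and summing over $\delta$ gives $\sum_{j \in \jif} \al_j^- = O(\e) \cdot \sum_\delta (W^{<\delta})^2/2^\delta$.

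The main obstacle is the final comparison $\sum_\delta (W^{<\delta})^2/2^\delta = O\bigl(\sum_\delta P^\delta W^\delta\bigr)$. Since every density in $A^\delta$ lies in $[2^\delta, 2^{\delta+1})$, we have $W^\delta \geq 2^\delta P^\delta$, reducing matters to showing $\sum_\delta (W^{<\delta})^2/2^\delta = O\bigl(\sum_\delta (W^\delta)^2/2^\delta\bigr)$. I would expand the square and swap the order of summation to get $\sum_{\delta_1, \delta_2} W^{\delta_1} W^{\delta_2}/2^{\max(\delta_1, \delta_2)}$ up to a constant. The diagonal terms $\delta_1 = \delta_2$ immediately contribute $O\bigl(\sum_\delta (W^\delta)^2/2^\delta\bigr)$. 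For the off-diagonal terms, parametrizing by the gap $r := |\delta_2 - \delta_1| \geq 1$ and applying Cauchy--Schwarz to the inner sum over $\delta_1$ picks up an extra factor of $2^{-r/2}$, so summing the resulting geometric series in $r$ completes the bound.
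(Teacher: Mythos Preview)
Your proof is correct, and the route differs from the paper's in an interesting way. Both arguments begin identically: for each fixed $(\delta,\eta)$ the distinct $\gamma$-values give a geometric series, so it suffices to charge one representative job per pair. Thereafter they diverge.

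The paper refines the upper bound on $\al_j^-$ \emph{before} summing. It introduces the index set $\I_j := \{\theta<\delta : P^\theta(r_j)\ge (1.5)^{\delta-\theta}p_j/8\e\}$ and shows $\al_j^- \le 4p_j\sum_{\theta\in\I_j}W^\theta$; the membership condition $\theta\in\I_j$ then directly yields $p_j \le 8\e P^\theta/(1.5)^{\delta-\theta}$, so after swapping the order of summation each $W^\theta$ is paired with $O(\e)P^\theta$ and a geometric tail in $\delta-\theta$. The $\e$ and the comparison to $P^\theta W^\theta$ appear simultaneously and locally.

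You instead keep the crude bound $\al_j^-\le p_j W^{<\delta}$, extract the $\e$ globally from the $J^-$ constraint $w_j<\e W^{<\delta}$ (which caps $2^\eta$), and postpone all the interaction between density classes to a single abstract inequality $\sum_\delta (W^{<\delta})^2/2^\delta = O\bigl(\sum_\delta (W^\delta)^2/2^\delta\bigr)$, dispatched by Cauchy--Schwarz with the geometric weight $2^{-r/2}$. This is cleaner in that it avoids the somewhat ad hoc set $\I_j$ and the intermediate claim; the price is the extra analytic step at the end, and a slightly less transparent constant. The paper's argument has the advantage that each $W^\theta$ is matched to \emph{its own} $P^\theta$ immediately, which makes the dependence on density classes more explicit.
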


\begin{proof}
  Let us first give a general method for bounding $\al_j^-$ of any job
  $j \in J^-$, and then we can apply it to the jobs in
  $\jif \subseteq J^-$. Recall that the jobs which contribute to
  $\al_j^-$ are the ones with a strictly smaller density class than that
  of $j$. We now show that one need not look at jobs of all such
  classes, and a subset of these classes suffice. Fix a job $j \in J^-$ of
  density class $\delta$, and define an index set $\I_j$ as 
 follows:
  \begin{gather}
    \I_j := \{ \theta < \delta \mid P^\theta(r_j) \geq (1.5)^{\delta -
      \theta} p_j /8\e \}.
  \end{gather}

  \begin{claim}
    \label{clm:alminus}
    For any job $j \in J^-$ with density class $\delta$, $\al_j^- \leq 4 p_j \cdot \sum_{\theta \in \I_j} W^\theta. $
  \end{claim}
  \begin{subproof}
    Let $j'$ be a job in $A(r_j)$ of strictly lower density class than
    $j$. Its contribution towards $\al_j^-$ is $p_j w_{j'}(r_j) $. Therefore,
    $\al_j^- $ is at most
    \begin{gather}
      \sum_{\theta < \delta} p_j W^\theta(r_j) = p_j \cdot \sum_{\theta
        \in \I_j} W^\theta(r_j) + p_j \cdot \sum_{\theta \notin \I_j,
        \theta < \delta} W^\theta(r_j). \label{eq:5}
    \end{gather}
    Let us bound the summation from the second expression.
    \begin{gather}
      \sum_{\theta \notin \I_j, \theta < \delta} W^\theta(r_j) \leq
      \sum_{\theta \notin \I_j, \theta < \delta} 2^{\theta + 1} \,
      P^\theta(r_j) \leq \sum_{\theta < \delta}
      \frac{(1.5)^{\delta-\theta}}{2^{\delta-\theta}} \cdot
      \frac{2^\delta p_j}{4\e} \leq\frac{3 w_j}{4 \e}. \label{eq:4}
    \end{gather}
    Substituting~(\ref{eq:4}) into~(\ref{eq:5}), and using that
    $\al_j^- \geq w_jp_j/\e$ for all jobs $j \in J^-$, we get that 
    $\nf{\alpha_j}4 \leq p_j \sum_{\theta \in \I_j} W^\theta(r_j) \leq p_j \sum_{\theta \in \I_j} W^\theta$, which  proves the desired result.
\end{subproof}
Recall that job $j \in J^-$ is mapped in table $T^-$ to the bucket
indexed by $(\lf \al_j^- \rf, \lf \rho_j \rf, \lf p_j \rf).$ For a fixed
pair $(\delta, \eta)$, consider the jobs in $\jif$ which are mapped to
buckets indexed $(\gamma, \delta, \eta)$ with various values of
$\gamma$, and denote these jobs by $J_{(\delta, \eta)}$. Since $\jif$
only contains the first job in each bucket, the $\lf \al_j^- \rf $
values of the various jobs in $J_{(\delta, \eta)}$ are all distinct. It
follows that if $j^\star$ is the job in $J_{(\delta, \eta)}$ with the
highest $\al_j^-$ value, then
$\sum_{j \in J_{(\delta, \eta)}} \al_j^- \leq 4\al_{j^\star}^-$. Thus,
we just need to worry about one job per $J_{(\delta, \eta)}$---let $S$
denote this set of jobs.

The ordered pairs $(\lf \rho_j \rf, \lf p_j \rf)$ corresponding to jobs
$j \in S$ are all distinct. For density class $\delta$, let $S^\delta$
denote the jobs in $S$ with density class $\delta$. Using
Claim~\ref{clm:alminus},
\begin{gather}
  \sum_{j \in S^\delta} \al_j^- \leq 4 \sum_{j \in S^\delta} p_j
  \sum_{\theta \in \I_j} W^\theta = 4 \sum_{\theta < \delta} W^\theta
  \sum_{j \in S^\delta : \theta \in \I_j} p_j. \label{eq:6}
\end{gather}
The jobs in $S^\delta$ also have different $\lf p_j \rf$ values, so the
sum $\sum_{j \in S^\delta : \theta \in \I_j} p_j \leq 4p_{j'}$ for the
job $j' := \arg\max\{ p_j \mid j \in S^\delta, \theta \in \I_j\}$. By
definition of $\I_j$, $p_{j'} \leq 8\e P^\theta/(1.5)^{\delta-\theta}$.
Substituting into~(\ref{eq:6}), 
\begin{gather}
  \sum_{j \in S^\delta} \al_j^- \leq 16 \sum_{\theta < \delta} \frac{8 \eps
    \; W^\theta P^\theta}{(1.5)^{\delta-\theta}}. \label{eq:7}
\end{gather}
To complete the argument, 
\begin{align*}
  \sum_{j \in \jif} \alpha_j^- \leq 4 \sum_{\delta} \sum_{j \in
  S^\delta} \alpha^-_j &\stackrel{\text{eq.(\ref{eq:7})}}{\leq}  
   2^9 \e \sum_\delta \sum_{\theta < \delta} \frac{W^\theta
  P^\theta}{(1.5)^{\delta-\theta}} \\
  &= 2^9 \e \sum_{\theta} W^\theta P^\theta \cdot \sum_{\delta > \theta} \frac{1}{(1.5)^{\delta-\theta}}
    = O\Big( \eps \sum_{\theta} W^\theta P^\theta\Big).
\end{align*}

This completes the proof of Lemma~\ref{lem:alminus}.
\end{proof}
Combining Lemmas~\ref{lem:densityclass} and~\ref{lem:alminus} completes
the proof of Claim~\ref{clm:minus-parttwo}, and hence for
Theorem~\ref{thm:flow-single}. 
In Section~\ref{sec:augmentation}, we show that the algorithm is
competitive even against an optimal algorithm that is allowed
$(1+\e)$-speed augmentation---and hence prove Theorem~\ref{thm:main2}.
\section{Comparing with off-line optimum with speed augmentation}
\label{sec:augmentation}

We now consider the case when the optimal algorithm is allowed $(1+\e')$ speed augmentation;
here $\e'$ will be $O(\e)$, and show that our algorithm is competitive
even with this stronger benchmark.  Let $\cO^{\e'}$ denote the new
optimal solution. Our algorithms $\A$ and $\B$ remain
unchanged. Lemma~\ref{lem:algA} remains unchanged because the definition
of $\al_j$ is the same. Lemma~\ref{lem:opt-lower} now gets modified as
follows. 
\begin{lemma}
  \label{lem:opt-lower-2}
  $\sum_{j \in J^\imm} \alpha_j - 2\e' \cdot \sum_j \alpha_j - 2\sum_j
  \nf{w_j p_j}{\e} \leq F^{\cO^{\e'}}$.
\end{lemma}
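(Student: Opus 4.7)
\textbf{Proof proposal for Lemma~\ref{lem:opt-lower-2}.} The plan is to revisit the LP–duality argument used for Lemma~\ref{lem:opt-lower} and track the single place where $(1+\e')$-speed augmentation changes the picture, namely the capacity constraint. With a machine of speed $1+\e'$, the primal LP for fractional weighted flow time becomes identical to the one in the proof of Lemma~\ref{lem:opt-lower} except that the machine-capacity constraint reads $\sum_j x_{t,j}\leq 1+\e'$ for every time $t$ (the objective and release/processing-requirement constraints are unchanged, since the variables $x_{t,j}$ already measure fractional work on the augmented machine). Taking the dual, the coefficient matrix is the same, so only the right-hand side of the $\beta_t$ constraint scales; the dual becomes
\begin{alignat*}{2}
    \max \ts \sum_j \alpha_j - (1+\e')\ts \sum_t \beta_t\\
    \ts \f{\alpha_j}{p_j} - \beta_t &\leq \ts \frac{w_j(t-r_j)}{p_j} + \frac{w_j}{2} &\quad& \forall j,t\\
    \alpha_j,\beta_t &\geq 0.
\end{alignat*}

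Next I would plug in the \emph{same} dual assignment used in the proof of Lemma~\ref{lem:opt-lower}: set $\alpha_j$ as in \eqref{eq:alpha} and $\beta_t:=\sum_{j\in A(t)} w_j(t)$. Dual feasibility is unchanged because the dual constraint itself has not moved; the two-case HDF-based analysis from the original proof applies verbatim. Weak duality then gives
\[
  \sum_j \alpha_j \;-\; (1+\e')\sum_t \beta_t \;\leq\; F^{\cO^{\e'}}.
\]

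Finally I would import the algorithm-side bound from Lemma~\ref{lem:algA}, namely $\sum_t \beta_t \leq \sum_{j\notin J^\imm}\alpha_j + \sum_j w_jp_j/\e$, and rearrange:
\[
  \sum_{j\in J^\imm}\alpha_j \;\leq\; F^{\cO^{\e'}} + \e'\sum_{j\notin J^\imm}\alpha_j + (1+\e')\sum_j \frac{w_jp_j}{\e}.
\]
Using $\e'\leq 1$ (since $\e'=O(\e)$ and $\e^2\le 1/2$) and $\sum_{j\notin J^\imm}\alpha_j\leq \sum_j\alpha_j\leq 2\sum_j\alpha_j$ absorbs the constants to match the stated form, yielding $\sum_{j\in J^\imm}\alpha_j - 2\e'\sum_j\alpha_j - 2\sum_j w_jp_j/\e \leq F^{\cO^{\e'}}$.

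The only subtlety—and the step most likely to cause confusion rather than real difficulty—is making sure the augmented machine enters the LP only through the capacity constant $1\mapsto 1+\e'$ (and not, e.g., by rescaling $p_j$ in the objective); once that is pinned down, dual feasibility is literally reused, and the rest is bookkeeping. Note also that this slack of $\e'$ on the primal capacity is exactly what lets us ``pay'' $\e' \sum_j\alpha_j$ on the algorithm side, which is what the statement of the lemma advertises.
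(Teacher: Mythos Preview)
Your proposal is correct and follows essentially the same route as the paper: modify only the capacity constraint to $1+\e'$, note that the dual objective becomes $\sum_j\alpha_j-(1+\e')\sum_t\beta_t$ while feasibility of the same $(\alpha_j,\beta_t)$ assignment is unchanged, then plug in Lemma~\ref{lem:algA} and absorb constants using $\e'\le 1$. The paper's proof is slightly terser but identical in substance.
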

\begin{proof}
  The LP relaxation for the off-line optimum with $(1+\e')$-speed
  augmentation is same as that in Lemma~\ref{lem:opt-lower} except that
  the constraint for each time $t$ changes to
  $$\ts\sum_j x_{t,j} \leq 1+\e'  \qquad\qquad \forall \text{ times } t $$
  As a result, the constraints in the dual objective function remain
  unchanged, but the dual objective value changes to
  $\sum_j \al_j - (1+\e') \cdot \sum_t \beta_t$.  Our definitions of
  $\al_j, \beta_t$ remain unchanged, and so, dual feasibility still
  holds. Since $\sum_t \beta_t$ denotes the total fractional weighted
  flow-time of the jobs, Lemma~\ref{lem:algA} shows that the dual
  objective value is at least
  $\sum_{j \in J^\imm} \al_j - 2\e' \sum_j \al_j - 2\sum_j \nf{w_j
    p_j}{\e}. $
\end{proof}

We are now ready to state the main result comparing against this
stronger benchmark. 
\begin{theorem}
  \label{thm:augment}
  The total fractional weighted flow-time, and hence the total weighted
  flow-time of non-rejected jobs, is
  $O(F^{\cO^{\e'}}\!\!/\e^2)$.
\end{theorem}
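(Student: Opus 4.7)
The plan is to combine Lemma~\ref{lem:algA}, Corollary~\ref{cor:alphas}, and the augmented version Lemma~\ref{lem:opt-lower-2} in essentially the same chain of inequalities~(\ref{eq:chainIneq}) used to prove Theorem~\ref{thm:flow-single}, but now choosing $\e'$ small enough (specifically $\e' = \Theta(\e)$ with a sufficiently small constant) so that the extra $2\e'\sum_j \al_j$ loss incurred in Lemma~\ref{lem:opt-lower-2} can be absorbed back into the left-hand side.

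Concretely, I would start from Corollary~\ref{cor:alphas}, which gives a constant $C$ with $\sum_j \al_j \leq (C/\e)\bigl(\sum_{j \in J^\imm}\al_j + \sum_j w_j p_j\bigr)$. Plugging in the bound $\sum_{j \in J^\imm} \al_j \leq F^{\cO^{\e'}} + 2\e'\sum_j \al_j + 2\sum_j w_jp_j/\e$ from Lemma~\ref{lem:opt-lower-2} yields
\begin{gather*}
\sum_j \al_j \;\leq\; \frac{C}{\e}\,F^{\cO^{\e'}} \;+\; \frac{2C\e'}{\e}\sum_j \al_j \;+\; O\!\left(\frac{1}{\e^2}\right)\sum_j w_j p_j.
\end{gather*}
Choosing $\e' := \e/(4C)$ makes the coefficient $2C\e'/\e \leq 1/2$, so the middle term can be moved to the left and we obtain $\sum_j \al_j \leq O(1/\e)\,F^{\cO^{\e'}} + O(1/\e^2)\sum_j w_jp_j$.

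To finish, I would invoke the trivial lower bound $F^{\cO^{\e'}} \geq \tfrac{1}{2(1+\e')}\sum_j w_j p_j = \Omega(\sum_j w_jp_j)$, which still holds with $(1+\e')$-speed augmentation because a job of size $p_j$ still requires $p_j/(1+\e')$ units of processing and contributes at least $w_j p_j/(2(1+\e'))$ to the fractional flow-time. This absorbs the $\sum_j w_jp_j/\e^2$ term into $O(F^{\cO^{\e'}}/\e^2)$, so $\sum_j \al_j = O(F^{\cO^{\e'}}/\e^2)$. Finally, Lemma~\ref{lem:algA} (which is unaffected by the augmentation, since the definition of $\al_j$ and the execution of $\A$ do not depend on $\e'$) bounds the fractional weighted flow-time of $\A$ by $\sum_{j \notin J^\imm} \al_j + \sum_j w_jp_j/\e \leq \sum_j \al_j + O(F^{\cO^{\e'}}/\e) = O(F^{\cO^{\e'}}/\e^2)$, and the factor-$2$ relationship between fractional and integral flow-time (Claim~\ref{cl:wtdef}) applied to the non-rejected jobs of $\B$ transfers the bound to integral flow-time.

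No step is genuinely hard here — the whole argument is a reshuffling of the single-machine analysis. The only mildly subtle point is ensuring that the constant $C$ from Corollary~\ref{cor:alphas} is fixed before we pick $\e'$, so that $\e' = \Theta(\e)$ depends only on universal constants and not circularly on $\e'$ itself; this is straightforward since $C$ is tracked through the chain Lemmas~\ref{lem:plus}--\ref{lem:alminus} independently of any speed-augmentation parameter.
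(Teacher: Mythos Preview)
Your proposal is correct and follows essentially the same approach as the paper's own proof: combine Corollary~\ref{cor:alphas} with Lemma~\ref{lem:opt-lower-2}, choose $\e' = \Theta(\e)$ small enough so the $2\e'\sum_j \al_j$ term is absorbed, then use $F^{\cO^{\e'}} \geq \sum_j w_jp_j/(2(1+\e'))$ and Lemma~\ref{lem:algA} exactly as you describe. The paper's proof is terser (it packages the first two steps into the single inequality $F^{\cO^{\e'}} \geq \Omega(\e\sum_j\al_j - \sum_j w_jp_j/\e)$), but your more explicit handling of the constant $C$ before fixing $\e'$ is the same argument spelled out.
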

\begin{proof}
  Corollary~\ref{cor:alphas} and Lemma~\ref{lem:opt-lower-2} imply that
  \[ F^{\cO^{\e'}} \geq \Omega\Big(\e \cdot \sum_j \al_j - \sum_j
    \nf{w_j p_j}{\e} \Big), \] and so
  $\sum_j \al_j \leq O \big( F^{\cO^{\e'}}\!\!/\e + \sum_j \nf{w_j
    p_j}{\e^2})\big)$.  Lemma~\ref{lem:algA} implies that the fractional
  flow time of our algorithm is greater by at most $\sum_j
  w_jp_j/\e$. Since $\sum_j w_jp_j/(2(1+\e)) \leq F^{\cO^{\e'}}$,
  and the fractional and integral weighted flow-time are within factor
  of 2 of each other if we consider jobs which are not preempted, we get
  the theorem.
\end{proof}
%Recall from Corollary~\ref{cor:alphas} that $\sum_{j \in J^\imm}
%\alpha_j  $ is $\Omega( \eps \cdot \sum_j \alpha_j - \eps \cdot \sum_j w_jp_j). $ 

\section{Extension to Unrelated Machines}

The extension of our result on single machine to the more general
scenario of unrelated machines can be done very modularly. 
\icalpremove{Recall that
in the unrelated machines setting, there are $m$ machines, and job $j$
has processing requirement $p_{ij}$ on machine $i$. For a subset of jobs
$S$ and parameter $\e' > 0$, let $\cO^{\e'}\!\!(S,i)$ denote the optimal
off-line solution to jobs in $J$ when we only consider machine $i$
(i.e., jobs in $J$ have processing time $p_{ij}$ on this single
machine), and we also augment this machine to have speed $(1+\e')$. Let
$F^{\cO^{\e'}\!\!(S,i)}$ denote the total weighted flow-time of this
solution. Let $J$ denote the entire input set of jobs. }
We shall use the
following result from~\cite{CGKM09,AGK12}.
\begin{theorem}
  \label{thm:flow}
  There is an online algorithm $\cD$ which dispatches each arriving job
  $j$ immediately upon arrival to one of the $m$ machines such that the
  following property holds: if $J^{(i)}$ is the set of jobs which are
  dispatched to machine $i$,
  then $\sum_i F^{\cO^{\e'}\!\!(J^{(i)},i)}$ is the optimal solution 
  to $J^{(i)}$ when we have only one machine with speed $(1+\e')$, 
  at most $\nf{1}{\e'}$ times the optimal weighted flow-time of $J$.
\end{theorem}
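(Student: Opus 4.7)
The plan is to prove Theorem~\ref{thm:flow} by combining immediate dispatch with LP dual-fitting, generalizing Lemma~\ref{lem:opt-lower} to the unrelated-machines setting. Start by writing the fractional LP relaxation of the offline unrelated-machines problem with unit capacities (i.e., \emph{without} speed augmentation on the offline benchmark), in variables $x_{ijt}$ representing the fraction of $j$ processed on machine $i$ in slot $t$, with processing-requirement constraints $\sum_{i,t} x_{ijt}/p_{ij}\geq 1$ and capacity $\sum_j x_{ijt}\leq 1$. Its dual has one variable $\al_j$ per job and one $\bt_{it}$ per machine-time slot, with constraints $\al_j/p_{ij}-\bt_{it}\leq w_j(t-r_j)/p_{ij}+w_j/2$ for every $(i,j,t)$ with $t\geq r_j$, and objective $\sum_j\al_j-\sum_{i,t}\bt_{it}$. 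By weak duality, any feasible dual value lower-bounds $F^\cO(J)$ up to the standard factor of 2 between fractional and integral flow-time.

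The algorithm $\cD$ is then: when $j$ arrives, compute for each machine $i$ the single-machine quantity $\al_{ij}$ defined via~\eqref{eq:alpha} on machine $i$ with speed $(1+\e')$, evaluated against the jobs currently on machine $i$, and dispatch $j$ to $i^\star(j):=\arg\min_i \al_{ij}$. For the dual fit, set $\al_j$ to be a suitable constant-factor scaling of $\al_{i^\star(j)\,j}$, and set $\bt_{it}$ to the residual weight on machine $i$ at time $t$ in the actual $(1+\e')$-speed HDF schedule of the dispatched jobs. Dual feasibility for $(i^\star(j),j,t)$ follows from the single-machine analysis of Lemma~\ref{lem:opt-lower} (with the speed-augmentation adjustment of Lemma~\ref{lem:opt-lower-2}) applied on machine $i^\star(j)$. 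For the triples $(i,j,t)$ with $i\neq i^\star(j)$, the dispatch minimality $\al_{i^\star(j)\,j}\leq\al_{ij}$ combined with the same HDF-based case analysis (splitting on whether the job running on $i$ at $t$ has density higher or lower than $\rho_{ij}$) yields the constraint there as well.

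Weak duality then gives $\sum_j\al_j-\sum_{i,t}\bt_{it}\leq F^\cO(J)$. The key quantitative step is to show the $(1+\e')$-speedup on each machine inflates the dual objective to $\Om(\e')$ times the algorithm's cost: specifically, the single-machine speed-augmentation analysis of Lemma~\ref{lem:opt-lower-2}, applied and summed across machines, yields $\sum_j\al_j-\sum_{i,t}\bt_{it}\geq\Om(\e')\sum_i F_i$, where $F_i$ is HDF's $(1+\e')$-speed fractional weighted flow-time on $J^{(i)}$. Since preemptive HDF is optimal for single-machine fractional weighted flow-time, $\sum_i F_i$ is within a constant of $\sum_i F^{\cO^{\e'}}(J^{(i)},i)$. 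Combining, $\Om(\e')\sum_i F^{\cO^{\e'}}(J^{(i)},i)\leq F^\cO(J)$, which gives the claimed $O(1/\e')$ bound.

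The main obstacle is the dual feasibility check on machines $i\neq i^\star(j)$: $\al_j$ is built solely from the chosen machine's state, yet the dual constraint must hold uniformly over all machines and times. Here the dispatch rule's minimality is essential---combined with a per-machine HDF case analysis mirroring that of Lemma~\ref{lem:opt-lower}, and with the $(1+\e')$-speedup providing just enough slack to absorb the multiplicative losses. Carrying out this verification uniformly across machines, so that no ``unused'' machine's idiosyncratic state breaks feasibility, is the technical heart of the argument; once done, the remaining chain of inequalities is mechanical.
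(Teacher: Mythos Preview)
The paper does not prove Theorem~\ref{thm:flow} at all: it is imported as a black box from~\cite{CGKM09,AGK12}, with the single sentence ``We shall use the following result from~\cite{CGKM09,AGK12}.'' So there is nothing in the paper to compare your argument against; you are reconstructing the cited result rather than matching the paper's own proof.

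Your sketch is essentially the dual-fitting argument of~\cite{AGK12}, and the outline is correct: dispatch to $\arg\min_i \al_{ij}$, set $\al_j$ from the chosen machine and $\bt_{it}$ from residual weight, verify feasibility on the chosen machine via the single-machine HDF analysis and on other machines via dispatch minimality, and read off the bound from weak duality. One point to tighten: your invocation of Lemma~\ref{lem:opt-lower-2} for the ``$\Omega(\e')$ inflation'' step is the wrong direction. That lemma handles the case where \emph{OPT} has extra speed, which weakens the dual lower bound; here you need the \emph{algorithm} to run at speed $(1+\e')$ against a speed-$1$ LP. The $\e'$ gain comes instead from scaling: with the algorithm at speed $(1+\e')$ one has $\sum_j\al_j=\sum_{i,t}\bt_{it}=F^{\A}$, and setting the dual $\tilde\bt_{it}:=\bt_{it}/(1+\e')$ both preserves feasibility (the HDF case analysis actually proves $\al_j/p_{ij}\leq \bt_{it}/(1+\e')+w_j(t-r_j)/p_{ij}+w_j/2$) and makes the dual objective $\sum_j\al_j-\sum_{i,t}\tilde\bt_{it}=\frac{\e'}{1+\e'}F^{\A}$. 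With that correction, your chain of inequalities goes through.
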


 The algorithms in~\cite{CGKM09,AGK12} actually build a schedule as
well and use this schedule to immediately dispatch a job. The algorithm
$\cD$ can build this schedule in the {\em background} and use it to
dispatch jobs, but not use it for actual processing. It follows from
Theorem~\ref{thm:flow} and Theorem~\ref{thm:augment}
that if we run our
algorithm on each of the machines $i$ (with input $J^{(i)}$ arriving
on-line) independently, then the total weighted flow-time of
non-rejected jobs in our algorithm is at most $O(1/\e^3)$ times the
optimal value. This proves Theorem~\ref{thm:main1}. 

\section{Conclusion}

We have given the first algorithm for  minimizing weighted  flow-time in the non-preemptive setting in the rejection model. It remains an interesting open problem to extend this
result to (weighted) $\ell_p$ norms of flow-time for values of $p > 1$, and in particular, for non-preemptive weighted maximum flow-time. 
%\alert{Complete this.}

\bibliographystyle{plain}
\bibliography{sched}

\begin{thebibliography}{10}

\bibitem{Anand13}
S~Anand.
\newblock {\em {Algorithms for flow time scheduling}}.
\newblock PhD thesis, Indian Institute of Technology, Delhi, 2013.

\bibitem{AGK12}
S.~Anand, Naveen Garg, and Amit Kumar.
\newblock Resource augmentation for weighted flow-time explained by dual
  fitting.
\newblock In {\em SODA'12}, pages 1228--1241. ACM, New York, 2012.

\bibitem{AvrahamiA03}
Nir Avrahami and Yossi Azar.
\newblock Minimizing total flow time and total completion time with immediate
  dispatching.
\newblock In {\em SPAA}, pages 11--18, 2003.

\bibitem{Bansal03}
Nikhil Bansal, Avrim Blum, Shuchi Chawla, and Kedar Dhamdhere.
\newblock Scheduling for flow-time with admission control.
\newblock In {\em Proc. ESA, 2003}.

\bibitem{BC09}
Nikhil Bansal and Ho-Leung Chan.
\newblock Weighted flow time does not admit o(1)-competitive algorithms.
\newblock In {\em SODA}, pages 1238--1244, 2009.

\bibitem{BartalLMSS00}
Yair Bartal, Stefano Leonardi, Alberto Marchetti-Spaccamela, Jiri Sgall, and
  Leen Stougie.
\newblock Multiprocessor scheduling with rejection.
\newblock {\em SIAM J. Discrete Math.}, 13(1):64--78, 2000.

\bibitem{CGKM09}
Jivitej~S. Chadha, Naveen Garg, Amit Kumar, and V.~N. Muralidhara.
\newblock A competitive algorithm for minimizing weighted flow time on
  unrelated machines with speed augmentation.
\newblock In {\em S{TOC}'09}, pages 679--683. ACM, New York, 2009.

\bibitem{CDGK-jour}
Anamitra~Roy Choudhury, Syamantak Das, Naveen Garg, and Amit Kumar.
\newblock Rejecting jobs to minimize load and maximum flow-time.
\newblock {\em J. Comput. System Sci.}, 91:42--68, 2018.

\bibitem{Epstein14}
Leah Epstein and Hanan Zebedat-Haider.
\newblock Preemptive online scheduling with rejection of unit jobs on two
  uniformly related machines.
\newblock {\em J. Scheduling}, 17(1):87--93, 2014.

\bibitem{GK06-icalp}
Naveen Garg and Amit Kumar.
\newblock Better algorithms for minimizing average flow-time on related
  machines.
\newblock In {\em ICALP}, volume 4051, pages 181--190. 2006.

\bibitem{GargK07-focs}
Naveen Garg and Amit Kumar.
\newblock Minimizing average flow-time : Upper and lower bounds.
\newblock In {\em 48th Annual {IEEE} Symposium on Foundations of Computer
  Science {(FOCS} 2007), October 20-23, 2007, Providence, RI, USA,
  Proceedings}, pages 603--613, 2007.

\bibitem{ImKMP14}
Sungjin Im, Janardhan Kulkarni, Kamesh Munagala, and Kirk Pruhs.
\newblock Selfishmigrate: {A} scalable algorithm for non-clairvoyantly
  scheduling heterogeneous processors.
\newblock In {\em 55th {IEEE} Annual Symposium on Foundations of Computer
  Science, {FOCS} 2014, Philadelphia, PA, USA, October 18-21, 2014}, pages
  531--540, 2014.

\bibitem{KP00}
Bala Kalyanasundaram and Kirk Pruhs.
\newblock Speed is as powerful as clairvoyance.
\newblock {\em J. ACM}, 47(4):617--643, 2000.

\bibitem{KTW-journal}
Hans Kellerer, Thomas Tautenhahn, and Gerhard~J. Woeginger.
\newblock Approximability and nonapproximability results for minimizing total
  flow time on a single machine.
\newblock {\em SIAM J. Comput.}, 28(4):1155--1166, 1999.

\bibitem{LR}
Stefano Leonardi and Danny Raz.
\newblock Approximating total flow time on parallel machines.
\newblock {\em Journal of Computer and Systems Sciences}, 73(6):875--891, 2007.

\bibitem{LucarelliTST16}
Giorgio Lucarelli, Nguyen~Kim Thang, Abhinav Srivastav, and Denis Trystram.
\newblock Online non-preemptive scheduling in a resource augmentation model
  based on duality.
\newblock In {\em 24th Annual European Symposium on Algorithms, {ESA} 2016,
  August 22-24, 2016, Aarhus, Denmark}, pages 63:1--63:17, 2016.

\bibitem{LucarelliTST17}
Giorgio Lucarelli, Nguyen~Kim Thang, Abhinav Srivastav, and Denis Trystram.
\newblock Online min-sum flow scheduling with rejections.
\newblock In {\em In 13th Workshop on Models and Algorithms for Planning and
  Scheduling Problems (MAPSP 2017), 2017}, 2017.

\end{thebibliography}

\end{document}